\theoremstyle{plain}
\newtheorem{theorem}{Theorem}[section]
\newtheorem{lemma}[theorem]{Lemma}
\newtheorem{corollary}[theorem]{Corollary}
\newtheorem{proposition}[theorem]{Proposition}
\newtheorem{definition}[theorem]{Definition}
\theoremstyle{remark}
\renewenvironment{proof}{\noindent{\it Proof}.}{\qed}
\numberwithin{equation}{section}
\title{Reducibility of Euler integrals and multiintegrals}
\author{ E.I. Ganzha\thanks{The author was supported by the Russian
Foundation for Basic Research under the grant No~06-01-00814 and a grant 09-09-1/NSh
from Krasnoyarsk State Pedagogical University.} \\
 Department of Mathematics, \\
   Krasnoyarsk State Pedagogical University,\\
ul. Lebedevoi, 89,\\
 660060 Krasnoyarsk \\
e-mail:\ \ \texttt{eiganzha@mail.ru} }
\date{}
\newcommand{\ord}{\ensuremath \mathop \mathrm{ord} \nolimits}
\begin{document}
\maketitle
\begin{abstract}

We discuss the notion of reduction of a special type of explicit
solutions which generalize the solutions appearing in the classical
Laplace cascade method of integration of hyperbolic equations of the
second order in the plane. We give algorithms of reduction and prove
that different natural precise definitions of reduction are
equivalent.

Keywords: cascade integration method, integrable systems, Euler
integrals.

\end{abstract}

\section{Introduction}

Classical methods of integration of linear partial differential
equations and systems of such equations (cf.\
\cite{forsyth,Darboux2,Gu}) were considerably generalized in the
last decades in \cite{Zh-St,St08,Athorne,ts05}. New applications of
these methods were found in two-dimensional spectral theory
\cite{TT07} and the theory of stochastic systems \cite{glt07}. The
following expressions appear as examples of exact solutions in the
methods developed in the aforementioned publications:
\begin{equation}\label{000}
\begin{array}{rl}
   f(x,y)= &  a_0(x,y)\varphi (x) + a_1(x,y) \varphi'(x)+ \ldots + a_n(x,y)
\varphi^{(n)}(x) + {} \\[0.5em]
    & \ \ \ \ \ \ \ \ \ \ \ \ {}+ b_0(x,y)\psi (y) +  \ldots +
b_m(x,y)\psi^{(m)}(y)
\end{array}
\end{equation}
with arbitrary functions $\varphi (x)$, $\psi (y)$ and fixed
coefficients $a_i(x,y)$, $b_j(x,y)$. First examples of such
solutions were found already very early by Euler \cite{Eul-III}. For
systems (\cite{Zh-St,St08,Athorne,ts05}) and higher-order equations
(\cite{LeRoux,Petren,pisati}) solutions may appear as sums of
several expression of the form (\ref{000}). Already Darboux and
other authors of the classical period had noted that the expressions
of the form (\ref{000}) sometimes may be simplified (i.e.\ the
orders $n$ and $m$ can be made lower) if one introduces new
arbitrary functions, for example as $\vartheta(x)=r_0(x)\varphi (x)
+ \ldots + r_k(x) \varphi^{(k)}(x)$. No general theory of
simplification (reduction) of the expression of type (\ref{000}) is
currently available. In this paper we give algorithms of reduction
and prove that different natural precise definitions of reduction
are equivalent. For simplicity we separate the parts of the
expression (\ref{000}) containing $\varphi (x)$ and $\psi (y)$, and
give the following definition:
\begin{definition}\label{def-Euler}
Euler integral in the plane is an expression of the form
\begin{equation}\label{2.1}
    \mathcal{I}=a_0(x,y)\varphi (x) + a_1(x,y) \varphi'(x)+ \ldots + a_n(x,y)
    \varphi^{(n)}(x),
\end{equation}
where $\varphi (x)$ is an arbitrary function of the variable $x$ and
$a_i(x,y)$ are given functions of the two variables $(x,y)$.
\end{definition}

Below we assume that $a_i(x,y)$ belong to some constructive
differential field of functions in the plane. In order to guarantee
correctness of our algorithms and results we need to require that we
can constructively decide if a given element of this field is zero;
we also obviously need at least existence of derivatives of the
coefficients $a_i(x,y)$ up to the order needed in the computations.
All functions should have a common domain of definition---some open
subset of $\mathbb{R}^2$. The simplest practically important example
of such a field is the field of rational functions
$\mathbb{Q}(x,y)$.

Hereafter we will often call Euler integrals (\ref{2.1}) simply
``integrals'' and write them as $\mathcal{I}= L(x,y)\varphi (x)$,
where
$$L(x,y) =
a_0(x,y)  + a_1(x,y)D  + \ldots + a_n(x,y)D^n, \qquad
D=\frac{d}{dx},
$$
is a linear ordinary differential operator (LODO). All  differential
operators in this paper will be linear ordinary differential
operators in $x$, i.e. they will include the derivatives
$D^s=d^s/dx^s$ only. Dependence of the coefficients of an operator
on the both variables $x$ and $y$ or on $x$ alone will be explicitly
shown after the sign of an operator. Everywhere below we will denote
linear ordinary differential operators in $x$ with upper-case Latin
letters  and all given functions with lower-case Latin letters;
lower-case Greek letters will denote arbitrary functions on $x$.

\begin{definition}\label{def-multiint}
Euler multiintegral in the plane is an expression of the form
     \begin{equation}\label{3.1}
\begin{array}{rcl}
  \mathcal{J}  & = & a_{10}(x,y)\varphi_1(x) +  a_{11}(x,y)\varphi_1'(x) +
      \ldots + a_{1n_1}(x,y)\varphi_1^{(n_1)}(x) +{} \\[0.3em]
  & &  {} + a_{20}(x,y)\varphi_2(x) + a_{21}(x,y)\varphi_2'(x) +
      \ldots + a_{2n_2}(x,y)\varphi_2^{(n_2)}(x) +{} \\
    & &  \ \ \ \ \ \ldots  \\
  &  &  {} + a_{k0}(x,y)\varphi_k(x) +  a_{k1}(x,y)\varphi_k'(x) +
      \ldots + a_{kn_k}(x,y)\varphi_k^{(n_k)}(x) ={} \\[1em]
    & = & L_1(x,y)\varphi_1(x) + \ldots +  L_k(x,y)\varphi_k(x), \\
\end{array}
     \end{equation}
where $\varphi_1(x)$, \ldots , $\varphi_k(x)$ are arbitrary
functions of the variable $x$ alone and $a_{ij}(x,y)$ are given
functions of the two variables $(x,y)$, $1 \leq i \leq k$,  $1 \leq
j \leq n_i$.
\end{definition}

For the sake of brevity we will denote the operator row
$\big(L_1(x,y), \ldots , L_k(x,y)\big)$ as $\overline{L}(x,y)$, and
the column of functions $\big(\varphi_1(x) , \ldots,
\varphi_k(x)\big)^t$ as $\mathring{\varphi}(x)$:
\begin{equation}\label{3.2}
  \mathcal{J} =\overline{L}(x,y)\mathring{\varphi}(x).
\end{equation}

In Petr\'en's thesis~\cite{Petren} the following linear partial
differential operators in the plain were considered:
\begin{equation}\label{P.1.1}
P(x,y)=\sum_{i=0}^{p-1} a_i(x,y)D_y^i D_x +\sum_{i=0}^{p-1}
b_i(x,y)D_y^i = A(x,y) D_x + B(x,y); \quad a_{p-1} \neq 0.
\end{equation}
Now we will show that any Euler integral
\begin{equation}\label{P.1.2}
\mathcal{I}=L(x,y)\varphi(x)
\end{equation}
can be an (incomplete) solution of some equation of Petr\'en type
$P(x,y)\mathcal{I} = 0$.

Suppose we have an integral $\mathcal{I}$ generated by an operator
$L(x,y)=l_n(x,y)D_x^n +l_{n-1}(x,y)D_x^{n-1} + \ldots +l_0(x,y)$. We
have to find an operator $P(x,y)$ of the form (\ref{P.1.1}), such
that
\begin{equation}\label{P.1.3}
    \big(P(x,y)  L(x,y)\big)\,\varphi(x)\equiv 0
\end{equation}
for any $\varphi(x)$. Since
$$
\begin{array}{c}
P(x,y)
L(x,y)=A(x,y)D_xL(x,y)+B(x,y)L(x,y)=\\[0.3em]
=A(x,y)L(x,y)D_x+A(x,y)[D_x,L(x,y)]+B(x,y)L(x,y)=
\\[0.3em]
=A(x,y)\big[l_n(x,y)D_x^{n+1} +l_{n-1}(x,y)D_x^{n} + \ldots
+l_0(x,y)D_x\big] +
\\[0.5em]
+A(x,y)\big[(l_n)'_x(x,y)D_x^{n} +(l_{n-1})'_x(x,y)D_x^{n-1} +
\ldots +(l_0)'_x(x,y)\big] +
\\[0.5em]
+B(x,y)\big[l_n(x,y)D_x^{n} +l_{n-1}(x,y)D_x^{n-1} + \ldots
+l_0(x,y)\big]
\end{array}
$$
and $\varphi(x)$ is an arbitrary function, then  (\ref{P.1.3}) is
equivalent to the system
\begin{equation}\label{P.2.4}
\left\{
\begin{array}{l}
  A(x,y) l_n(x,y)=0,\\[0.3em]
  A(x,y)\big[l_n(x,y) +(l_{n-1})'_x(x,y)\big] +B(x,y) l_n(x,y)=0, \\[0.3em]
  \ \ \ \ \ \ \ldots \\
  A(x,y)\big[l_0(x,y) +(l_{1})'_x(x,y)\big] +B(x,y) l_1(x,y)=0,  \\[0.3em]
  A(x,y) (l_{0})'(x,y) + B(x,y) l_{0}(x,y) =0,
\end{array}
\right.
\end{equation}
where $l_i(x,y)$ are the given coefficients of the operator
$L(x,y)$. Since (\ref{P.2.4}) is a system with $(n+2)$ linear
homogeneous algebraic equations for $2p$ unknown coefficients
$b_{0}$, $b_{1}$, \ldots, $b_{p-1}$, $a_{0}$, \ldots , $a_{p-1}$,
one easily concludes that (\ref{P.2.4}) always has a nontrivial
solution if $p
> \frac{n+2}{2}$ and consequently every integral of the form
(\ref{P.1.2}) is a solution (certainly an incomplete solution in
general) of some equation of Petr\'en type~(\ref{P.1.1}) of
sufficiently high degree.

Analogously for multiintegrals~(\ref{3.1}), it is possible to find
some equation of Petr\'en type satisfied by a given multiintegral.
In this case the coefficients of the equation~(\ref{P.1.1}) satisfy
a set of systems of the form~(\ref{P.2.4}) (a system~(\ref{P.2.4})
is formed for every operator $L_i(x,y)$) with the total number of
linear homogeneous algebraic equations equal to $N=\sum_{i=1}^k n_i
+ 2k$, where $n_i$ is the order of the operator $L_i(x,y)$. So if
$2p > N$ there is a nontrivial solution of such a system for the
coefficients $a_{i}$, $b_{i}$ of an equation of Petr\'en type.

The linear case considered in this paper can be a basis for a
further study of exact solutions of nonlinear PDEs integrable by
Darboux method: as we know (see \cite{2ZhS2001,anderson,Gu}) Darboux
integrability is equivalent to integrability of the corresponding
linearized equation by the Laplace cascade method.

\section{Euler integrals}

In order to give the most general precise definition of reducibility
of a Euler integral (\ref{2.1}) we indroduce the following natural
definition:

\begin{definition}\label{def-zapas}
The set of all functions $Z(\mathcal{I})$ of the two variables
$(x,y)$ which will be obtained after substitution of arbitrary
(smooth) functions $\varphi (x)$ into (\ref{2.1}) will be called the
function stock generated by some Euler integral $\mathcal{I}$ of the
form (\ref{2.1}).
\end{definition}

Any such  function stock is an infinite-dimensional linear space;
its elements may be obviously added and multiplied with constants.

\begin{definition}\label{def-ord} For any Euler integral
(\ref{2.1}) we will call the order $n$ of the leading derivative of
the arbitrary function $\varphi (x)$ the order of the integral.
\end{definition}

\begin{definition}\label{def-red-ord} We will call a Euler integral
(\ref{2.1}) \emph{order-reducible} if there exists another integral
$\mathcal{I}_1 = b_0(x,y)\psi (x) + \ldots + b_m(x,y)\psi ^{(m)}(x)$
of smaller order $m$, $m<n$, generating the same function stock as
the given integral $\mathcal{I}$.
\end{definition}

Note that we do not assume any relation between the functions
$\varphi (x)$ and $\psi (x)$ in $\mathcal{I}$ and $\mathcal{I}_1$.
As we prove below, if an integral is order-reducible then there
exists a differential relation between $\varphi (x)$ and $\psi (x)$.

\begin{definition}\label{def-red-oper} We call a  Euler integral
(\ref{2.1}) \emph{operator-reducible} if the corresponding operator
$L(x,y)$ may be represented as a composition of operators
\begin{equation}\label{2.3}
    L(x,y) = M(x,y)  R(x),
\end{equation}
where $M(x,y)$ is another operator of lower order and the
coefficients of the operator $R(x)$ depend (as shown) only on $x$.
\end{definition}

In this case the function stocks generated by the integrals
$\mathcal{I} = L(x,y)\varphi (x)$ and $\mathcal{I}_1 = M(x,y)\psi
(x)$ obviously coincide and there exists the differential relation
$\psi (x) = R(x)\varphi (x)$.

\begin{definition}\label{def-red-ker} We will say that  a  Euler integral
(\ref{2.1}) has no kernel  (is \emph{kernel-irreducible}), if
$\mathcal{I} = L(x,y)\varphi (x)$ identically vanishes as a function
of the two variables only for $\varphi (x) \equiv 0$. In the
opposite case we will call the integral $\mathcal{I}$
\emph{kernel-reducible}.
\end{definition}

Below we will need the following constructive way of order reduction
for any Euler integral.

\newpage

\begin{center}
    \textbf{Algorithm} $\mathbf{RI}$
\end{center}

 The input data of the algorithm is a  Euler integral
(\ref{2.1}). The output of the algorithm is a representation of this
integral in the form (\ref{2.3}) with an operator $M(x,y)$ of
minimal possible order or a proof that the given integral is
operator-irreducible.

Let us take two new formal variables $y_1$ and $y_2$. Consider
$L(x,y_1)$ and $L(x,y_2)$ as two different ordinary operators in
$D_x$ with coefficients depending on the parameters $y_1$ and $y_2$
respectively. One can find the right greatest common divisor of
these operators $rGCD(L(x,y_1), L(x,y_2))$ using the well-known
non-commutative version of the Euclidean algorithm (cf. for example
\cite{Jacobson}), performing divisions of these linear ordinary
differential operators with remainders:
\begin{equation}\label{E1}
     \begin{array}{rcl}
       L(x,y_1) &=& A_0(x,y_1,y_2)L(x,y_2) + B_1(x,y_1,y_2), \\
       L(x,y_2) &=& A_1(x,y_1,y_2)B_1(x,y_1,y_2) + B_2(x,y_1,y_2), \\
       \ldots\ \ \  &\ldots & \ \ \ \ \ \ \ \ \  \ldots\\
       B_{k-1}(x,y_1,y_2) &=& A_k(x,y_1,y_2)B_k(x,y_1,y_2).
     \end{array}
\end{equation}
So we have $rGCD(L(x,y_1), L(x,y_2))=B_k(x,y_1,y_2)$.

\emph{Case 1: $\ord B_k(x,y_1,y_2)=0$.} Then the operators
$L(x,y_1)$ and $L(x,y_2)$ have no nontrivial  right greatest common
divisor and our algorithm terminates. The integral $\mathcal{I} =
L(x,y)\varphi (x)$ is operator-irreducible.

\emph{Case 2: $\ord B_k(x,y_1,y_2)=\ord L(x,y_1)=\ord L(x,y_2) =n$.}
This means that the Euclidean algorithm has terminated on the first
division step and $L(x,y_1) = a_0(x,y_1,y_2)L(x,y_2)$, $\ord a_0 =
0$. Consequently $L(x,y) = a_0(x,y,y_0)L(x,y_0)$, where $y_0$ is
some fixed generic point (such that $a_n(x,y_0) \not\equiv 0$ and
$a_0(x,y,y_0)$ is correctly defined), i.e.\ $L(x,y)$ is a LODO with
coefficients depending on $x$ only, multiplied on the left with a
function in the two variables $(x,y)$. Thus we have obtained a
representation (\ref{2.3}) with $\ord M(x,y)=0$ and $R(x)=L(x,y_0)$.
As a result we conclude that the integral $\mathcal{I}$ is
operator-reducible to another integral
$\mathcal{I}_1=a_0(x,y,y_0)\psi(x)$ of order zero.

\emph{Case 3: $\ord B_k(x,y_1,y_2) =m$, $0<m<\ord L(x,y)$.} Then, if
$B_k(x,y_1,y_2) =B_k(x)$, i.e.\ does not depend on $y_1$, $y_2$,
$R(x)=B_k(x)$ so we again obtain the required reduction of the
integral. In the opposite case we introduce a new formal variable
$y_3$ and return to the stage (\ref{E1}) of our algorithm
$\mathbf{RI}$ applying the Euclidean algorithm to the operators
$L(x,y_3)$ and $B_k(x,y_1,y_2)$. Again some operator
$B^{(1)}_r(x,y_1,y_2,y_3)=rGCD(L(x,y_1), L(x,y_2),L(x,y_3))$ will be
found, and in the Cases~1 and 2 (when $\ord B^{(1)}_r=0$ or $\ord
B^{(1)}_r=\ord B_k$) the algorithm terminates; in the Case~3 we
return to the stage (\ref{E1}) adding a new formal variable $y_4$
and finding $B^{(2)}_k(x,y_1,y_2,y_3,y_4)=rGCD(L(x,y_1), \ldots
,L(x,y_4))$. Since on every cycle for the Case~3 the order of
$B^{(i)}_k$ decreases at least by one, after a finite number of
steps the algorithm terminates. As a result the operators $M(x,y)$
and $R(x)$ appearing in the representation (\ref{2.3}) will be
found; obviously the order of $R(x)$ will be maximal and the
integral $\mathcal{I}_1 = M(x,y)\psi (x)$ will be
operator-irreducible.

Note that for an operator-irreducible integral $\mathcal{I} =
L(x,y)\varphi (x)$ in the process of the work of the algorithm one
can find points $y_1$, $y_2$, \ldots , $y_s$, $s \leq n+1$, and
operators $P_1(x)$, \ldots , $P_s(x)$ such that
\begin{equation}\label{P0}
    P_1(x)L(x,y_1) + \ldots + P_s(x)L(x,y_s) = 1.
\end{equation}
In fact one can take generic $y_i$, i.e.\ some points which don't
coincide with the poles of the coefficients of all operators
appearing in the Euclidean algorithm and don't make their leading
coefficients identically zero. The operators $P_i(x)$ are found
using the standard reverse substitution in the equations (\ref{E1}).

One should note that in the general case the number of the points
$y_i$ can not be made smaller than $n+1$. Below we construct an
example of such second-order integral $\mathcal{I} = L(x,y)\varphi
(x)$ that $\forall y_1, y_2$, $rGCD(L(x,y_1),
L(x,y_2))=B_k(x,y_1,y_2)$, $\ord B_k(x,y_1,y_2)=1$, but there exist
points $ y_1, y_2,y_3$, such that $rGCD(L(x,y_1),
L(x,y_2),L(x,y_3))=B^{(1)}_r(x,y_1,y_2,y_3) = 1$.

To this end let us consider the following three-dimensional linear
space of second degree polynomials:
$$
V = \{ p(x)=\alpha \cdot 1 + \beta \cdot x + \gamma \cdot x^2 | \
\alpha,\, \beta,\, \gamma \in \mathbb{R} \}.
$$
Introduce the following two-dimensional subspaces $W_y$ depending on
a parameter $y$ which are spanned by the basis
$$
\begin{array}{l}
  z_1 = y \cdot 1 + (y+1)\cdot x, \\
  z_2 = y \cdot x + (y+1)\cdot x^2,
\end{array}
 \qquad W_y = \langle z_1, z_2 \rangle.
$$
As one can readily check, for any fixed $y$ the functions $z_1$,
$z_2$ are linearly independent and for every two distinct values of
the parameter $y_1 \neq y_2$, $W_{y_1}$ and $W_{y_2}$ have a
one-dimensional intersection. On the other hand $W_{y_1} \cap
W_{y_2} \cap W_{y_3} = \{ 0 \}$. This allows us to construct the
following LODO
$$
L(x,y)\varphi(x)= \left|
\begin{array}{ccc}
  \varphi & \varphi' & \varphi'' \\
  z_1 & z'_1 & z''_1 \\
  z_2 & z'_2 & z''_2
\end{array}
 \right|
\cdot \left|
\begin{array}{cc}
  z_1 & z'_1  \\
  z_2 & z'_2
\end{array}
 \right| ^{-1}  =
$$
$$
= \left[ D^2 - \frac{2(y+1)}{xy+x+y} D
+\frac{2(y+1)^2}{(xy+x+y)^2}\right]\varphi(x),
$$
with $rGCD(L(x,y_1), L(x,y_2))=B_1(x,y_1,y_2)=
 (y_1-y_2)\big( (xy_1+x+y_1)(xy_2+x+y_2) D
  - (2 x y_1 y_2 + 2 x y_1 + 2 x y_2 + 2 x + 2 y_1 y_2 + y_1 + y_2 \big)
$, but $rGCD(L(x,y_1), L(x,y_2),L(x,y_3))= 1$. So one concludes that
in our algorithm $\mathbf{RI}$ for the constructed operator $L(x,y)$
one can not find two points $y_1$, $y_2$ and operators $P_1(x)$,
$P_2(x)$, such that $P_1(x)L(x,y_1) + P_2(x)L(x,y_2) = 1$ although
the integral $\mathcal{I} = L(x,y)\varphi (x)$ is
operator-irreducible. On the other hand for three generic points
$y_1$, $y_2$, $y_3$ the equality (\ref{P0}) holds. One can easily
construct in the same way an example on an integral of order $n$ for
which one has to choose at least $n+1$ points $y_i$ in order to get
(\ref{P0}).

\begin{theorem}\label{teorema-1}
The three definitions~\ref{def-red-ord}, \ref{def-red-oper} and
\ref{def-red-ker} of reducibility of Euler integrals are equivalent.
\end{theorem}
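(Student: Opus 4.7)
The plan is to use operator-reducibility as the hub, proving it equivalent to each of the other two notions. The easy half consists of operator-reducibility implying both order-reducibility and kernel-reducibility: if $L(x,y) = M(x,y) R(x)$ with $\ord M < \ord L$, setting $\psi = R\varphi$ yields $L\varphi = M\psi$; since $R$ has positive order, $\varphi \mapsto R\varphi$ is surjective onto smooth functions of $x$ and has a nontrivial kernel, so the function stocks of $\mathcal{I} = L\varphi$ and $\mathcal{I}_1 = M\psi$ coincide and any nonzero $\varphi_0 \in \Ker R$ satisfies $L\varphi_0 \equiv 0$.

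For kernel-reducibility $\Rightarrow$ operator-reducibility, I would pick a nonzero $\varphi_0$ with $L\varphi_0 \equiv 0$ and form the first-order LODO $R_0 = D - \varphi_0'/\varphi_0$ in $x$ alone (in the fraction field of the coefficient ring). Right-dividing $L$ by $R_0$ in the ring of LODOs in $D_x$ with coefficients in $(x,y)$ gives $L = Q(x,y) R_0(x) + s(x,y)$ with a functional remainder $s$; applying both sides to $\varphi_0$ yields $0 = s(x,y)\varphi_0(x)$, forcing $s \equiv 0$ and producing the factorization $L = Q R_0$.

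The only nontrivial direction is order-reducibility $\Rightarrow$ operator-reducibility, and I would argue it by contradiction: assume the stocks of $\mathcal{I} = L\varphi$ and $\mathcal{I}_1 = M\psi$ coincide with $\ord M = m < n = \ord L$, but $L$ is operator-irreducible. Then by the observation surrounding identity~(\ref{P0}), there exist generic points $y_1,\dots,y_s$ and LODOs $P_1(x),\dots,P_s(x)$ with $\sum_i P_i(x) L(x,y_i) = 1$. Given any $\psi$, the equation $L\varphi = M\psi$ has a unique solution $\varphi_\psi$ (uniqueness from kernel-irreducibility of $L$, equivalent to operator-irreducibility by the two previous paragraphs); evaluating $L\varphi_\psi = M\psi$ at each $y=y_i$ and combining via the $P_i$ yields
\[
\varphi_\psi(x) = \sum_i P_i(x) L(x,y_i)\varphi_\psi(x) = \sum_i P_i(x) M(x,y_i)\psi(x) = \tilde Q(x)\psi(x),
\]
where $\tilde Q(x) := \sum_i P_i(x) M(x,y_i)$ is an honest LODO in $x$ alone. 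Substituting $\varphi_\psi = \tilde Q\psi$ back into $L\varphi_\psi = M\psi$ produces the operator identity $L\tilde Q = M$; since $L \neq 0$ and $M \neq 0$ (the common stock is nontrivial, as $L$ is kernel-irreducible), $\tilde Q \neq 0$, whence $\ord M = \ord L + \ord \tilde Q \geq n$, contradicting $m < n$.

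The main obstacle is precisely this last implication: equality of the two stocks is merely a coincidence of infinite-dimensional images of two linear maps, and there is a priori no reason for the back-map $\psi \mapsto \varphi_\psi$ to be a differential operator. The decisive ingredient is identity~(\ref{P0}), available exactly when $L$ is operator-irreducible, which provides an LODO reconstructing $\varphi$ from finitely many slices $L(x,y_i)\varphi$ of the function $L\varphi$. This reconstruction is what converts the stock-level relation $L\varphi = M\psi$ into the operator-level identity $L\tilde Q = M$, which is ultimately incompatible with a strict drop in order.
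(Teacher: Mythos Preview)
Your proof is correct and rests on the same decisive ingredient as the paper's: the identity~(\ref{P0}) for an operator-irreducible LODO, which converts the stock-level equality $L\varphi = M\psi$ into a genuine operator identity. The difference is only in direction. The paper first reduces the lower-order integral to an operator-irreducible $M$ via Algorithm~\textbf{RI}, applies~(\ref{P0}) to that $M$, and reads off $\psi = R(x)\varphi$ with $R = \sum_i P_i L(x,y_i)$, hence the factorization $L = M R$ directly (this is the content of Lemma~\ref{lemma-1}). You instead assume, for contradiction, that $L$ is operator-irreducible, apply~(\ref{P0}) to $L$, and obtain $\varphi = \tilde Q(x)\psi$, hence $M = L\tilde Q$, which is impossible on order grounds. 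Both routes are valid; the paper's has the advantage of being constructive (it exhibits the factorization rather than merely proving its existence) and of producing Lemma~\ref{lemma-1} as a standalone statement, which is then reused to establish uniqueness of the irreducible representative (Proposition~\ref{Utv-9a}) and the dichotomy for intersections of function stocks (Theorem~\ref{Teorema.2.10}).
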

\begin{proof}
If an integral $\mathcal{I}$ is operator-reducible, i.e.\ if
$\mathcal{I}= L(x,y)\varphi (x) =M(x,y) R(x)\varphi (x)$ with some
non-trivial $R(x)$ then the integral $\mathcal{I}$ has a nontrivial
kernel which at least contains the space of solutions of the
equation $R(x)\varphi (x)=0$. Hence $\mathcal{I}$ is
kernel-reducible.

If an integral $\mathcal{I}$ is kernel-reducible, that is there
exists a non-zero function $\varphi (x)$ such that $\mathcal{I} =
L(x,y)\varphi (x) \equiv 0$ then obviously $L(x,y)$ is representable
in the form
$$
L(x,y) =L_1(x,y)\left(\frac{d}{dx} - \frac{\varphi _x' }{\varphi
}\right),
$$
so the integral $\mathcal{I}$ is operator-reducible.

It is obvious that operator-reducible integrals are order-reducible.
In order to prove the converse we need the following Lemma:
\begin{lemma}\label{lemma-1}
Let two Euler integrals $\mathcal{I} = L(x,y)\varphi (x)$ and
$\mathcal{I}_1 = M(x,y)\psi (x)$ of the form (\ref{2.1}) are given,
and $\mathcal{I}_1$ is operator-irreducible. Let the following
inclusion for the function stocks generated by the integrals hold:
$Z(\mathcal{I})\subseteq Z(\mathcal{I}_1)$. Then the order of
$\mathcal{I}$ is greater or equal to the order of $\mathcal{I}_1$,
the integral $\mathcal{I}$ can be operator-reduced to the integral
$\mathcal{I}_1$  (i.e. $L(x,y) = M(x,y) R(x) $ for some operator
$R(x)$), and the function stocks generated by the integrals
$\mathcal{I}$ and $\mathcal{I}_1$ coincide.
\end{lemma}
\begin{proof}
Applying the algorithm $\mathbf{RI}$ to the integral $\mathcal{I}_1
= M(x,y)\psi (x)$ we find points $y_1$, $y_2$, \ldots , $y_s$ and
operators $P_1(x)$, \ldots , $P_s(x)$, such that (\ref{P0}) holds
for $M(x,y)$, since $\mathcal{I}_1$ is already operator-irreducible.
Substituting the points $y_i$, $i = 1,\ldots , s$, into the operator
$L(x,y)$ which generates the integral $\mathcal{I}$, we obtain $s$
LODO $L(x,y_i)$ with coefficients depending on $x$.

For any function $ \varphi (x)$ there exists a function $ \psi (x)$,
such that
$$ L(x,y)\varphi (x) = M(x,y)\psi (x),
$$
since $Z(\mathcal{I})\subseteq Z(\mathcal{I}_1)$. Hence
\begin{equation}\label{2.4}
    L(x,y_i)\varphi (x) = M(x,y_i)\psi (x)
\end{equation}
for $i=1, \ldots , s$. Multiplying the equalities (\ref{2.4}) with
the corresponding $P_i(x)$ and adding we obtain
$$ \sum_{i=1}^{s}P_i(x) L(x,y_i)\varphi (x) =\sum_{i=1}^{s}P_i(x) M(x,y_i)\psi
(x),
$$
or in view of (\ref{P0}) for the operator $M(x,y)$, we obtain
$$ \sum_{i=1}^{s}P_i(x) L(x,y_i)\varphi (x) =\psi(x).
$$
This means that  $\psi(x)$ is differentially expressible in terms of
$\varphi (x)$: $\psi(x) = R(x)\varphi (x)$, $R(x)=
\sum_{i}P_i(x)L(x,y_i)$, so $L(x,y)\varphi (x) = M(x,y)R(x)\varphi
(x)$, i.e.\ $\mathcal{I}$ is operator-reducible to $\mathcal{I}_1$.
Since for every function $\psi(x)$ there exists a solution $\varphi
(x)$ of an ordinary differential equation $\psi(x) = R(x)\varphi
(x)$, we have $M(x,y)\psi (x)= M(x,y)R(x)\varphi (x)=L(x,y)\varphi
(x)$ for any $\psi(x)$ and the corresponding $\varphi (x)$.
Consequently $Z(\mathcal{I}) = Z(\mathcal{I}_1)$.
\end{proof}

\emph{The end of the proof of Theorem~\ref{teorema-1}.}

Let an integral $\mathcal{I} = L(x,y)\varphi (x)$ be order-reducible
to some $\mathcal{I}_1 = L_1(x,y)\psi (x)$. We must show that
$\mathcal{I}$ is operator-reducible. One can suppose $\mathcal{I}_1$
to be already operator-irreducible since in the opposite case one
can use the algorithm $\mathbf{RI}$ to find an operator-irreducible
$\mathcal{I}_2 = M(x,y)\xi (x)$ with the same function stock. Since
the function stocks generated by $\mathcal{I}_1$ and $\mathcal{I}_2$
coincide, we can apply Lemma~\ref{lemma-1} to the integrals
$\mathcal{I}$ and $\mathcal{I}_2$ and obtain that $L(x,y) =
M(x,y)R(x)$, i.e.\ the integral $\mathcal{I}$ is operator-reducible.
\end{proof}

In view of the obtained result we will call integrals simply
``reducible'' or ``irreducible'' without specifying the precise
meanings given in the definitions~\ref{def-red-ord},
\ref{def-red-oper} and \ref{def-red-ker}.
\begin{corollary}\label{sl-p6}
The kernel of a Euler integral $\mathcal{I} = L(x,y)\varphi (x)$
coincides with the kernel of the operator $R(x)$, obtained in the
algorithm~\textbf{RI}.
\end{corollary}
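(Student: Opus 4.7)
The plan is to exploit the factorization $L(x,y) = M(x,y)\,R(x)$ produced by algorithm $\mathbf{RI}$, in which (as noted at the end of the algorithm's description) the integral $\mathcal{I}_1 = M(x,y)\psi(x)$ is operator-irreducible. The whole content of the corollary then reduces to reading off the two inclusions between $\ker L(x,y)$ (viewed as the kernel of the Euler integral, i.e.\ the set of $\varphi(x)$ with $L(x,y)\varphi(x)\equiv 0$ in $(x,y)$) and $\ker R(x)$ (in the usual ODE sense).

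For the inclusion $\ker R(x) \subseteq \ker \mathcal{I}$ there is nothing to do: if $R(x)\varphi(x)=0$ then $L(x,y)\varphi(x) = M(x,y)\,R(x)\varphi(x) = M(x,y)\cdot 0 \equiv 0$, so $\varphi$ lies in the kernel of the integral.

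For the reverse inclusion, suppose $L(x,y)\varphi(x) \equiv 0$. Set $\psi(x) := R(x)\varphi(x)$; this is an ordinary function of $x$ alone. From the factorization, $M(x,y)\,\psi(x) \equiv 0$ as a function of $(x,y)$, which exactly says that $\psi$ lies in the kernel of the Euler integral $\mathcal{I}_1 = M(x,y)\psi(x)$. Here I invoke Theorem~\ref{teorema-1}: since $\mathcal{I}_1$ is operator-irreducible it is also kernel-irreducible, so $\psi(x) \equiv 0$, i.e.\ $R(x)\varphi(x)=0$, which places $\varphi$ in $\ker R(x)$.

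The argument is essentially a one-line consequence of the previously-established equivalence of operator-irreducibility and kernel-irreducibility, so I expect no genuine obstacle; the only subtlety to point out explicitly is that algorithm $\mathbf{RI}$ is designed to deliver $M(x,y)$ with $R(x)$ of \emph{maximal} order, precisely so that the residual $M(x,y)\psi(x)$ is operator-irreducible and Theorem~\ref{teorema-1} applies.
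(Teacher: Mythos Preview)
Your proof is correct and essentially identical to the paper's own argument: both show $\ker R(x)\subseteq\ker\mathcal{I}$ trivially from the factorization, and for the reverse inclusion set $\psi=R\varphi$, note $M(x,y)\psi\equiv 0$, and use irreducibility of $M(x,y)$ to force $\psi=0$. The only cosmetic difference is that you make the appeal to Theorem~\ref{teorema-1} (operator-irreducible $\Rightarrow$ kernel-irreducible) explicit, whereas the paper simply says ``since $M(x,y)$ is irreducible.''
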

\begin{proof} From (\ref{2.3}) one can easily see that $Ker \, R(x) \subseteq Ker\,
\mathcal{I}$. Let $\varphi (x)\in Ker\, \mathcal{I}$. Then
$0=L(x,y)\varphi (x)=M(x,y)R(x)\varphi (x)=M(x,y)\psi(x)$,
$\psi=R\varphi$. From this we conclude that $\psi(x)=0$, since
$M(x,y)$ is irreducible. Consequently $\varphi (x)\in Ker\, R(x)$.
\end{proof}

We are summing up the results about the structure of reducible
integrals obtained so far in the following Proposition:
\begin{proposition}\label{Utv-9a}
Among Euler integrals generating the same function stocks there is a
unique  (up to gauge transformations $L_0\mapsto L_0(x,y)g(x)$)
irreducible integral $\mathcal{I}_0 = L_0(x,y)\psi (x)$. All other
integrals with the same function stock have the following form:
$\mathcal{I} = L_0(x,y)R(x)\varphi (x)$, where $R(x)$ is an
arbitrary LODO with coefficients depending on $x$ only.
\end{proposition}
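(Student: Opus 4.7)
The plan is to derive all three assertions of the proposition directly from Lemma~\ref{lemma-1} and algorithm $\mathbf{RI}$, which together already give us enough control on the structure of the function-stock equivalence classes.

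\emph{Existence} of an irreducible representative in the class is essentially free: given any integral $\mathcal{I} = L(x,y)\varphi(x)$ whose function stock is the one under consideration, I would feed it into algorithm $\mathbf{RI}$. By the algorithm's specification the output is a factorization $L(x,y) = L_0(x,y) R(x)$ with $L_0$ operator-irreducible, and as observed just after Definition~\ref{def-red-oper} the integral $\mathcal{I}_0 = L_0(x,y)\psi(x)$ generates the same function stock as $\mathcal{I}$.

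\emph{Uniqueness} up to the stated gauge: suppose $\mathcal{I}_0 = L_0(x,y)\psi(x)$ and $\tilde{\mathcal{I}}_0 = \tilde{L}_0(x,y)\tilde{\psi}(x)$ are two irreducible integrals with $Z(\mathcal{I}_0) = Z(\tilde{\mathcal{I}}_0)$. I would apply Lemma~\ref{lemma-1} to this pair, taking $\tilde{\mathcal{I}}_0$ in the role of the irreducible one, obtaining a factorization $L_0(x,y) = \tilde{L}_0(x,y)\,R(x)$ with $R(x)$ an LODO in $x$ alone. But $\mathcal{I}_0$ is itself operator-irreducible, so this factorization is only possible if $\ord R = 0$, i.e.\ $R(x) = g(x)$ is just multiplication by a nonvanishing function of $x$. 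This is precisely the gauge equivalence $L_0 \mapsto \tilde{L}_0(x,y)\,g(x)$ claimed in the statement.

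For the \emph{parametrization} of the full class, let $\mathcal{I} = L(x,y)\varphi(x)$ be any integral with $Z(\mathcal{I}) = Z(\mathcal{I}_0)$. Lemma~\ref{lemma-1}, applied with $\mathcal{I}_0$ as the irreducible integral, produces an operator $R(x)$ with $L = L_0\,R$, hence $\mathcal{I} = L_0(x,y)\,R(x)\varphi(x)$. Conversely, for any LODO $R(x)$ the expression $L_0(x,y)\,R(x)\varphi(x)$ generates the same function stock as $\mathcal{I}_0$, because for an arbitrary smooth $\psi(x)$ the ODE $R(x)\varphi(x) = \psi(x)$ admits a solution, as was already used inside the proof of Lemma~\ref{lemma-1}.

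The only delicate point is the uniqueness step, and it is really only a matter of interpretation: one must read the gauge action $L_0 \mapsto L_0(x,y)\,g(x)$ as \emph{right composition} with the zero-order operator ``multiplication by $g(x)$'' rather than pointwise multiplication of a function by $g$. Once this is understood, irreducibility of $\mathcal{I}_0$ forces the $R(x)$ produced by Lemma~\ref{lemma-1} to have order zero, and no further work is needed.
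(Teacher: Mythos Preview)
Your proof is correct and follows essentially the same route as the paper: existence via algorithm~$\mathbf{RI}$, and both uniqueness and the parametrization of the class via Lemma~\ref{lemma-1}. The only cosmetic difference is in the uniqueness step: the paper applies Lemma~\ref{lemma-1} in \emph{both} directions to obtain $L_1 = L_2 R_1$ and $L_2 = L_1 R_2$, so that $R_2 R_1 = 1$ forces $R_1$ to be invertible in the ring of LODOs (hence order zero), whereas you apply the lemma once and read off $\ord R = 0$ directly from operator-irreducibility of $L_0$---both arguments are equally valid and equally short.
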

\begin{proof}
Let $Z(\mathcal{I}_1) = Z(\mathcal{I}_2)$. Using the
algorithm~$\mathbf{RI}$ if necessary we may assume that
$\mathcal{I}_1$ and $\mathcal{I}_2$ are irreducible. Applying
Lemma~\ref{lemma-1} we see that $L_1(x,y) = L_2(x,y)R_1(x)$,
$L_2(x,y) = L_1(x,y)R_2(x)$, so $R_1(x)=g(x)$ must be invertible in
the ring of LODO so it must be a zero-order operator. If
$\mathcal{I}$ is an arbitrary (reducible) integral and
$\mathcal{I}_0$ is irreducible, we can use Lemma~\ref{lemma-1} again
obtaining the required identity $L(x,y) = L_0(x,y)R(x)$.
\end{proof}

\begin{theorem}\label{Teorema.2.10}
Let $\mathcal{I}$ and $\mathcal{I}_1$ be Euler integrals of the
form~(\ref{2.1}). Then the function stocks generated by these
integrals either coincide or have a finite-dimensional intersection.
\end{theorem}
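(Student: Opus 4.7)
The plan is to reduce to the irreducible case, cross-express the two arbitrary functions via differential operators in $x$ alone, and then argue by dimension counting.

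First, using the algorithm $\mathbf{RI}$ together with Lemma~\ref{lemma-1}, I may replace $\mathcal{I}$ and $\mathcal{I}_1$ by irreducible integrals generating the same function stocks. Irreducibility yields identities of the form (\ref{P0}) for each of $L$ and $L_1$: points $y_1,\dots,y_s$ and operators $P_i(x)$ with $\sum_i P_i(x)L(x,y_i)=1$, and points $z_1,\dots,z_t$ with operators $Q_j(x)$ satisfying $\sum_j Q_j(x)L_1(x,z_j)=1$. Now take any $f\in Z(\mathcal{I})\cap Z(\mathcal{I}_1)$ and write $f=L(x,y)\varphi(x)=L_1(x,y)\varphi_1(x)$. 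Substituting $y=z_j$, multiplying by $Q_j(x)$ and summing collapses the right-hand side, giving $\varphi_1=R(x)\varphi$ where $R(x):=\sum_j Q_j(x)L(x,z_j)$ is a LODO whose coefficients depend only on $x$. Symmetrically one obtains $\varphi=R'(x)\varphi_1$ for an operator $R'(x)$.

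Combining these produces the ODE $(R'R-I)\varphi=0$ in $x$ alone. Let $S$ be the space of all $\varphi$ arising this way; kernel-irreducibility of $\mathcal{I}$ (Theorem~\ref{teorema-1}) makes $\varphi\mapsto L\varphi$ injective, hence $\dim S=\dim\bigl(Z(\mathcal{I})\cap Z(\mathcal{I}_1)\bigr)$. If $R'R-I\neq 0$, then $S$ lies in the solution space of a nonzero finite-order ODE and is finite-dimensional, yielding the first alternative. Otherwise $R'R=I$, so $R$ is a unit in the Ore ring of LODOs in $x$; thus $\ord R=0$ and $R(x)=r(x)$ is merely a function. Then $(L-L_1 r)\varphi=0$ for every $\varphi\in S$, and if $S$ is infinite-dimensional then at each fixed $y=y_0$ the finite-order LODO $L(x,y_0)-L_1(x,y_0)r(x)$ annihilates an infinite-dimensional space of functions and must vanish; since this holds for every $y_0$, we get the operator identity $L=L_1 r$, whence $Z(\mathcal{I})=Z(\mathcal{I}_1)$.

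The main obstacle is engineering the two-sided cross-expression $\varphi_1=R\varphi$, $\varphi=R'\varphi_1$ with operators depending only on $x$: this requires applying the irreducibility identity (\ref{P0}) on \emph{both} sides simultaneously, and it is what ultimately converts the geometric intersection problem into a clean ODE in $x$, after which the dichotomy follows from the standard fact that a nonzero finite-order LODO has finite-dimensional kernel.
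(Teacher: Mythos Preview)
Your proof is correct and follows a route close in spirit to the paper's but with a genuinely different twist. The paper uses the irreducibility identity~(\ref{P0}) only on \emph{one} side, obtaining $\varphi = N(x)\psi$ and then the relation $\bigl(L(x,y)N(x)-L_1(x,y)\bigr)\psi=0$; the dichotomy is split according to whether the mixed operator $F(x,y)=L(x,y)N(x)-L_1(x,y)$ vanishes identically. When it does not, the paper applies the algorithm~$\mathbf{RI}$ a \emph{second} time, now to $F(x,y)$, extracting a maximal right factor $S(x)$ and invoking Corollary~\ref{sl-p6} to identify the kernel of the integral $F(x,y)\psi$ with the finite-dimensional $\Ker S(x)$. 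Your approach instead applies~(\ref{P0}) on both sides, producing directly the $x$-only operator $R'R-I$; this avoids the second pass through~$\mathbf{RI}$ and the appeal to Corollary~\ref{sl-p6}, at the price of a separate treatment of the degenerate case $R'R=I$. In that case your reduction to $R=r(x)$ of order zero (via the degree argument in the Ore ring), followed by the pointwise-in-$y$ observation that an infinite-dimensional $S$ forces the operator identity $L=L_1r$, is clean and correct. Both routes yield the same dichotomy; yours is slightly more symmetric and self-contained, while the paper's is a bit more direct once Corollary~\ref{sl-p6} is in hand and has the side benefit of explicitly identifying the intersection as the $L_1$-image of $\Ker S(x)$.
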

\begin{proof}
We again can assume that the given integrals $\mathcal{I}$ and
$\mathcal{I}_1$ are irreducible. Let $\mathcal{I} = L(x,y)\varphi
(x)$, $\mathcal{I}_1 = L_1(x,y)\psi (x)$ and some function $f(x,y)$
is contained in the intersection of the corresponding function
stocks. Then the following equality holds:
\begin{equation}\label{2.5}
f(x,y)= L(x,y)\varphi (x) = L_1(x,y)\psi (x)
\end{equation}
for some $\varphi (x)$, $\psi (x)$. Apply the algorithm
$\mathbf{RI}$ to the integral $\mathcal{I}$. In the process we
obtain a finite number of points $y_1$, $y_2$, \ldots , $y_s$ and
operators $P_1(x)$, \ldots , $P_s(x)$, such that~(\ref{P0}) holds.
Substituting the points $y_i$ into (\ref{2.5}), making the
appropriate linear combination and using (\ref{P0}) we obtain
\begin{equation}\label{2.7}
\varphi (x) = N(x)\psi (x),
\end{equation}
where $N(x)= \sum_{i=1}^{s}P_i(x) L_1(x,y_i)$. Substituting
(\ref{2.7}) into (\ref{2.5}) we get
\begin{equation}\label{2.8}
L(x,y)N(x)\psi (x) = L_1(x,y)\psi (x)
\end{equation}
or
\begin{equation}\label{2.9}
\Big(L(x,y)N(x) - L_1(x,y)\Big)\psi (x) = 0.
\end{equation}
If the operator $L(x,y)N(x) - L_1(x,y)$ vanishes identically then
any function $\psi (x)$ satisfies (\ref{2.9}), and, consequently,
also (\ref{2.8}). From this we conclude that the function stocks
generated by $\mathcal{I}$ and $\mathcal{I}_1$, coincide
(Lemma~\ref{lemma-1}). If the operator $L(x,y)N(x) - L_1(x,y)$ does
not vanish, we deduce from (\ref{2.9}) that $\psi (x)$ belongs to
the kernel of the integral, generated by the operator
$F(x,y)=L(x,y)N(x) - L_1(x,y)$. Applying the algorithm $\mathbf{RI}$
to $F(x,y)$ one obtains the decomposition $F(x,y)=F_1(x,y)S(x)$,
where $F_1(x,y)$ is irreducible and $\psi (x)$ belongs to the
finite-dimensional space of solutions of the ordinary differential
equation $S(x)\psi (x)=0$. From (\ref{2.5}) we conclude that the
intersection of the function stocks, generated by the integrals
$\mathcal{I}$ and $\mathcal{I}_1$ is the image of this
finite-dimensional space under the action of the operator
$L_1(x,y)$. This concludes the proof.
\end{proof}

\textbf{Remark.} Obviously the proofs of the
Proposition~\ref{Utv-9a} and Theorem~\ref{Teorema.2.10} imply that
we have a simple algorithmic way to check if the function stocks
generated by two given integrals $\mathcal{I}_1$ and $\mathcal{I}_2$
coincide. For this we find the corresponding irreducible operators
$L_1(x,y)$ and $L_2(x,y)$ with the same function stocks using the
algorithm~$\mathbf{RI}$. If the orders of $L_1(x,y)$, $L_2(x,y)$ do
not coincide then the function stocks $\mathcal{I}_1$ and
$\mathcal{I}_2$ are different. If the orders of $L_1(x,y)$ and
$L_2(x,y)$ coincide, we divide $L_1(x,y)$ by $L_2(x,y)$ on the left:
$$
 L_1(x,y) = L_2(x,y)g (x,y) + B(x,y), \qquad \ord B < \ord L_1=\ord L_2.
$$
The function stocks of $\mathcal{I}_1$ and $\mathcal{I}_2$ coincide
if and only if $B(x,y) \equiv 0$, $g _y \equiv 0$. In order to find
the dimension of the intersection $Z(\mathcal{I}_1) \cap
Z(\mathcal{I}_2)$ we repeat the procedure described in the proof of
Theorem~\ref{Teorema.2.10}, i.e.\ we form the operators in the left
hand side of~(\ref{2.9}) and apply the algorithm~\textbf{RI} to
split off the maximal right divisor $S(x)$ of the operator $F(x,y)$.
The dimension of the intersection $Z(\mathcal{I}_1) \cap
Z(\mathcal{I}_2)$ will be equal to the order of the operator $S(x)$.

\section{Multiintegrals}

In this Section we prove the equivalence of two possible definitions
of reducibility of Euler multiintegrals (\ref{3.1}) and give the
algorithms of reduction and other necessary technical stuff.

\begin{definition}\label{def-zapas-multi}
    The function stock $Z(\mathcal{J})$, generated by a multiintegral
(\ref{3.1}) is the set of all functions of two variables $(x,y)$
which is obtained after the substitution of arbitrary (smooth)
functions $\varphi_1 (x)$, \ldots , $\varphi_k(x)$ into (\ref{3.1}).
\end{definition}
\begin{definition}\label{def-red-oper-multi}
    We will call a multiintegral (\ref{3.1}) \emph{operator-reducible},
if one can represent its operator row $\big(L_1(x,y), \ldots ,
L_k(x,y)\big)$ as a composition of operator matrices
\begin{equation}\label{3.3}
 \big(L_1(x,y), \ldots , L_k(x,y)\big) = \big(M_1(x,y), \ldots , M_p(x,y)\big)
  \left(
 \begin{array}{ccc}
   R_{11}(x) & \ldots & R_{1k}(x) \\
    \vdots & \ddots & \vdots \\
   R_{p1}(x) & \ldots & R_{pk}(x)
    \end{array}
  \right),
\end{equation}
where $R_{ij}(x)$ are differential operators with coefficients
depending on $x$ only, $M_i(x,y)$ are differential operators with
coefficients depending on $(x,y)$; the matrix $\big(R_{ij}(x)\big)$
is required to be non-invertible and $p \leq k$.
\end{definition}
Below we will briefly represent the formula (\ref{3.3}) as
\begin{equation}\label{3.4}
     \overline{L}(x,y) = \overline{M}(x,y)\widehat R(x).
\end{equation}
Here and below $\widehat R(x)$ stands for a matrix of differential
operators (not necessarily a square matrix). Invertibility of such a
matrix is understood in the algebraic sense, i.e.\ as existence of
another operator matrix $\widehat S(x)$, such that $\widehat R
\widehat S = \widehat S \widehat R =\big(\delta_{ij}\big)$.
\begin{definition}\label{def-red-jadro-multi}
We will call a multiintegral (\ref{3.1}) \emph{kernel-irreducible}
(or we say that the multiintegral has no kernel) if $\mathcal{J} =
L_1(x,y)\varphi_1(x) + \ldots + L_k(x,y)\varphi_k(x)$ vanishes as a
function of two variables only for $\varphi_i(x)\equiv 0$, $i=1,
\ldots , k$. In the contrary case the multiintegral $\mathcal{J}$
will be called \emph{kernel-reducible}.
\end{definition}
Note that we do not find appropriate to introduce here an analogue
of \emph{order-reducibility} of multiintegrals. As we will see
below, even irreducible multiintegrals with the same function stocks
admit a representation (\ref{3.1}) with operators $L_i(x,y)$ of
arbitrary high order. This follows from the fact that for arbitrary
multiintegral one can find representations (\ref{3.4}) with
\emph{invertible} matrices $\widehat R(x)$ of operators $R_{ij}(x)$
of arbitrary high order.

One can perform the following operations on the matrix $\widehat
R(x)$: transposition of two rows and addition to one of the row of
another  row multiplied on the left with an ordinary differential
operator with coefficients depending only on $x$.

The following algorithm uses these operations for reduction of the
matrix $\widehat R(x)$ to a special convenient form.

\begin{center}
    \textbf{Algorithm} $\mathbf{RM}$
\end{center}

First of all, note that the matrix $\widehat R(x)$ in the
representation~(\ref{3.4}) may be considered to be free from zero
rows and zero columns.

If the first column has only one non-zero element, transpose the
rows and put it into the first (uppermost) place, then start
processing of the second column. If there are several non-zero
elements in the first column, we perform the division of one of them
with another one with the remainder, for example
$$R_{11}(x) = T(x)R_{21}(x) + Q(x).
$$
Then, if one subtracts the second row of the matrix $\widehat R(x)$
multiplied on the left with the differential operator $T(x)$ from
the first row, one obtains an equivalent matrix, which has the
operator $Q(x)$ of lower order than the previous entry $R_{11}(x)$.
It is obvious that such operations correspond to the operations of
the Euclidean algorithm (\ref{E1}) and result in an equivalent
matrix with $R^{(1)}_{11}(x)= rGCD(R_{11},R_{21})$, and
$R^{(1)}_{21}\equiv 0$. Consecutively applying this procedure to the
other non-zero elements of the first column we obtain a matrix which
has only one non-zero element $R^{(2)}_{11}(x)= rGCD(R_{11}, \ldots
, R_{p1})$ in the first column.

Next we proceed to the second column. If it has some of the entries
$R_{22}$, $R_{32}$, \ldots , $R_{p2}$ different from zero, we apply
to them the same procedure as above and obtain $R^{(2)}_{22}(x)=
rGCD(R_{22}, \ldots , R_{p2})$. If, contrarily, all of $R_{22}$,
\ldots , $R_{p2}$ vanish, we start processing of the elements
$R_{23}$, $R_{33}$, \ldots , $R_{p3}$ of the third column. Obviously
this procedure is a direct non-commutative analogue of the standard
Gauss elimination algorithm, so in a finite number of steps we
obtain a matrix of the form
$$
\widehat R_{\texttt{st}}(x) = \left(
\begin{array}{cccccccc}
  R_{11}(x) & \ldots & * & * & \ldots & * & * & \ldots \\
  0 & \ldots & 0& R_{2q_2}(x)  & \ldots & * & * &   \\
  0 & \ldots & 0 & 0 & \ldots & 0 &  R_{3q_3}(x) & \ldots \\
  \vdots & \vdots &\vdots & 0 & \vdots & 0 &  0 & \ddots \\
  0 & \ldots & \ldots &  \ldots &  \ldots & \ldots  &  \ldots &
\end{array}
\right),
$$
which will be called below \emph{echelon matrix}. In a more general
case (which is of no interest to us) echelon matrices may have a few
zero starting columns. First non-zero elements of each row
$R_{11}(x)$, $R_{2q_2}(x)$, $R_{3q_3}(x)$, \ldots \  will be called
(as usual) \emph{pivots}. The number of non-zero rows of the
obtained echelon matrix is called the \emph{rank} of the initial
operator matrix. As proved in \cite{Jacobson,Lopat}, the rank is an
invariant of a matrix and does not depend on the method of its
reduction to an echelon form.

The procedure of reduction to the echelon form is equivalent to
multiplication of the initial matrix $\widehat R(x)$ on the left
with some invertible matrix $\widehat T(x)$. Therefore in the
formula (\ref{3.4}) we can substitute $\overline{M}(x,y)$ with
$\overline{M}(x,y)\widehat T^{-1}(x)$ and $\widehat R(x)$ with
$\widehat R_{\texttt{st}}(x)=\widehat T(x)\widehat R(x)$; so we can
assume below that the matrix $\widehat R(x)$ already is echelon
matrix.

It is easy to see that the matrix $\widehat R(x)$ is invertible if
and only if it is a square matrix and after its reduction to echelon
form we obtain an upper-triangular matrix with non-zero diagonal
elements which are invertible elements in the ring of differential
operators, that is if they are zero-order operators (functions). The
elements strictly above the main diagonal may be arbitrary
differential operators.

Analogously to the algorithm $\mathbf{RI}$ we give below another
algorithm which extracts the ``maximal'' right factor $\widehat
R(x)$ from a given multiintegral, i.e.\ for a given operator row
$\overline{L}(x,y)$ a representation (\ref{3.4}) will be found, with
a row $\overline{M}(x,y)$ defining an operator-irreducible
multiintegral with the same function stock.

\begin{center}
    \textbf{Algorithm} $\mathbf{RMI}$
\end{center}

The initial data of the algorithm is an operator row $\big(L_1(x,y),
\ldots , L_k(x,y)\big)$.

According to the decomposition (\ref{3.3}) with an echelon matrix
$\widehat R(x)$ we see that one has to find operators $M_i(x,y)$,
$R_{ij}(x)$, such that
\begin{eqnarray}
 L_1(x,y)&= &M_1(x,y)R_{11}(x),\label{3.5.1} \\
 L_2(x,y)&= &M_1(x,y)R_{12}(x) + M_2(x,y)R_{22}(x),\label{3.5.2} \\
 \ldots & & \ldots \nonumber
\end{eqnarray}
Using the algorithm $\mathbf{RI}$ we find the decomposition
(\ref{3.5.1}), where $R_{11}(x)$ is a LODO of maximal order and
$M_1(x,y)$ is irreducible. Also as a byproduct of the algorithm
$\mathbf{RI}$ we get points $y_1$, \ldots , $y_s$ and operators
$P_1(x)$, \ldots , $P_s(x)$, such that
\begin{equation}\label{3.5.3}
  P_1(x) M_1(x,y_1) + \ldots + P_s(x) M_1(x,y_s) = 1.
\end{equation}
Now we should find  $R_{12}(x)$, $M_2(x,y)$, $R_{22}(x)$ such that
the equality (\ref{3.5.2}) holds. We obtain this decomposition in
three stages.

\ \ \ \ \emph{Stage A. Finding $R_{22}(x)$.}

Substituting the found points $y_i$ into (\ref{3.5.2}) we obtain
$$
L_2(x,y_i)= M_1(x,y_i)R_{12}(x) + M_2(x,y_i)R_{22}(x).
$$
Multiplying these identities with the operators $P_i(x)$ on the left
and adding, we get
\begin{equation}\label{3.5.5}
    \left[\sum_{i=1}^{s}P_i(x)M_1(x,y_i)\right]R_{12}(x) +
    \left[\sum_{i=1}^{s}P_i(x)M_2(x,y_i)\right]R_{22}(x)=
    \sum_{i=1}^{s}P_i(x)L_2(x,y_i),
\end{equation}
or, in view of (\ref{3.5.3}),
\begin{equation}\label{3.6.1}
  R_{12}(x) +
    \left[\sum_{i=1}^{s}P_i(x)M_2(x,y_i)\right]R_{22}(x)=
    \sum_{i=1}^{s}P_i(x)L_2(x,y_i).
\end{equation}
Multiplying (\ref{3.6.1}) on the left with $M_1(x,y)$ and
subtracting from (\ref{3.5.2}), we arrive at:
\begin{equation}\label{3.6.2}
     \begin{array}{l}
 \left[M_2(x,y) - M_1(x,y)
 \sum_{i}P_i(x)M_2(x,y_i)\right]R_{22}(x)\\[1em]
 \ \ \ \ \ \ \ \ \ \ \ \ {}=
 L_2(x,y) -M_1(x,y)\sum_{i}P_i(x)L_2(x,y_i).
     \end{array}
\end{equation}
The right hand side of this equality is some known operator
$N(x,y)$. Hence in order to find $R_{22}(x)$ one has to apply the
algorithm $\mathbf{RI}$ to $N(x,y)$, obtaining the decomposition
\begin{equation}\label{3.6.2'}
    K(x,y)R_{22}(x) =N(x,y)
\end{equation}
with $R_{22}(x)$ of maximal order; the obtained operator $K(x,y)$ is
irreducible.

\ \ \ \ \emph{Stage B. Finding $R_{12}(x)$.}

Rewrite (\ref{3.6.1}) as
\begin{equation}\label{3.6.1'}
 R_{12}(x) + S(x)   R_{22}(x)= T(x)\equiv \sum P_i(x)L_2(x,y_i),
\end{equation}
where the operators $T(x)$ and $R_{22}(x)$ are known. The operator
$S(x)$ can be arbitrarily chosen (this corresponds to multiplication
of the echelon matrix $\widehat R(x)$ on the left with an invertible
upper-triangular matrix) and define the corresponding $R_{12}(x)$.

\ \ \ \ \emph{Stage C. Finding $M_{2}(x,y)$.}

Recall that  (\ref{3.6.2}) was obtained from (\ref{3.6.1}) and
(\ref{3.5.2}). Now we do the inverse operation: multiply
(\ref{3.6.1'}) with $M_1(x,y)$ on the left and add (\ref{3.6.2'}) to
it:
$$M_1(x,y)R_{12}(x) + \left[M_1(x,y)S(x) + K(x,y)\right]R_{22}(x)=
 L_2(x,y).
$$
Thus we obtain the required equality (\ref{3.5.2}) with $M_2(x,y)
 =M_1(x,y)S(x) + K(x,y)$.

Note that since the obtained $K(x,y)$ is irreducible we can find
points $y_{2,1}$, $y_{2,2}$, \ldots , $y_{2,s_2}$ and operators
$P_{2,1}(x)$, \ldots , $P_{2,s_2}(x)$, such that
\begin{equation}\label{3.7}
 \sum_{i=1}^{s_2}P_{2,i}(x)K(x,y_{2,i}) = 1.
\end{equation}

If the number of operators $L_i(x,y)$ in the row $\overline{L}(x,y)$
is greater than two, we can analogously find the required operators
$M_j(x,y)$, $R_{ij}(x)$, $j \geq 3$. We demonstrate this below for
$j=3$.

In the equality
\begin{equation}\label{3.8}
    L_3(x,y)=M_1(x,y)R_{13}(x) + M_2(x,y)R_{23}(x) + M_3(x,y)R_{33}(x)
\end{equation}
the operators $M_1(x,y)$ and $M_2(x,y)$ are known. Using the points
$y_1$, \ldots , $y_s$ chosen above and (\ref{3.5.3}) we arrive at
the following analogues of (\ref{3.6.1}) and (\ref{3.6.2}):
\begin{equation}\label{3.11.15}
K(x,y)R_{23}(x) + \left[M_3(x,y) -M_1(x,y)
\sum_{i}P_i(x)M_3(x,y_i)\right]R_{33}(x)={}
\end{equation}
$$
{}= L_3(x,y) -M_1(x,y)\sum_{i}P_i(x)L_3(x,y_i).
$$
Using now the points $y_{21}$, \ldots , $y_{2s_2}$ and the equality
(\ref{3.7}), we can cancel the term $K(x,y)R_{23}(x)$ and obtain an
analogue of (\ref{3.6.2'}):
$$
K_3(x,y)R_{33}(x) = N_3(x,y)
$$
with the known operator $ N_3(x,y)$. This allows us to find
$R_{33}(x)$ using the algorithm $\mathbf{RI}$. Repeating \emph{stage
B} we find $R_{13}(x)$, $R_{23}(x)$ with some freedom, analogous to
the freedom of choice of $S(x) $ in (\ref{3.6.1'}). Repeating the
\emph{stage C} one finds $ M_3(x,y)$.

\textbf{Remark}. It may happen that the right hand side of
(\ref{3.6.2}) vanishes. Then we set $R_{22}(x)\equiv 0$ and find
$R_{12}(x)$ from (\ref{3.6.1}). Thus $ M_2(x,y)$ is actually absent
in (\ref{3.5.2}) and we can set $R_{33}(x)\equiv 0$ on the following
step in (\ref{3.8}); in this case the pivot element becomes
$R_{23}(x)$. Thus $R_{ij}(x)$ becomes a non-square echelon matrix.
The number of operators $ M_i(x,y)$ in this case will be smaller
that the number of the initial operators $ L_i(x,y)$, i.e. $p<k$.

As the result of the algorithm $\mathbf{RMI}$ we obtain an operator
row $\overline{M}(x,y) = \big(M_1(x,y), \ldots , M_p(x,y)\big)$ and
an operator  echelon matrix $\widehat{R}(x)$ in the representation
(\ref{3.3}). It is not obvious that the complete multiintegral
$\mathcal{J}_1 = \overline{M}(x,y)\mathring{\psi}(x)$ is
operator-irreducible. This fact will be proved below
(Corollary~\ref{3.6-}). Inclusion ${Z}\big(\mathcal{J}\big)
\subseteq {Z}\big(\mathcal{J}_1\big)$ is obvious. The inverse
inclusion follows from the simple fact that the echelon matrix
$\widehat{R}(x)$ obtained in the algorithm $\mathbf{RMI}$ has no
zero rows so the system of linear ordinary differential equations
$\widehat{R}(x)\mathring{\varphi} = \mathring{\psi}$ for the unknown
$\mathring{\varphi}$ is solvable for any right hand side
$\mathring{\psi}$.

\textbf{Remark}. If the given multiintegral was operator-irreducible
then the algorithm \textbf{RMI} will given the decomposition
(\ref{3.3}) with $k=p$ and a triangular invertible matrix
$\widehat{R}(x)$ as the output. The ambiguity of the choice of the
operator $S(x)$ on \emph{stage B} will correspond to the possibility
to write a representation (\ref{3.3}) with arbitrary triangular
invertible matrix $\widehat{R}(x)$ for any multiintegral
$\mathcal{J}$. Note that the requirement of non-invertibility of the
matrix $\widehat{R}(x)$ without the requirement $p \leq k$ in the
definition~\ref{def-red-oper-multi} is not sufficient: it is easy to
give an example of decomposition (\ref{3.3}) with $p>k$ (so with a
non-invertible matrix $\widehat{R}(x)$) for any multiintegral
$\mathcal{J}=\overline{L}(x,y)\mathring{\varphi}(x)$.

\begin{proposition}\label{utver-nered-M}
The multiintegral $\big(M_1(x,y), \ldots , M_p(x,y)\big)$ obtained
in the process of the work of the algorithm $\mathbf{RMI}$ is
kernel-irreducible.
\end{proposition}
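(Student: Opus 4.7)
The plan is to verify the kernel condition directly by running a cascade of eliminations that mirrors (in reverse) the construction of the operators $M_j$ in algorithm $\mathbf{RMI}$. Suppose that
\[
M_1(x,y)\psi_1(x) + \ldots + M_p(x,y)\psi_p(x) \equiv 0
\]
as a function of two variables. I would first use the points $y_i$ and operators $P_i(x)$ satisfying (\ref{3.5.3}): substituting $y=y_i$, multiplying on the left by $P_i(x)$, and summing I obtain
\[
\psi_1(x) = -\sum_{j=2}^{p} S_j^{(1)}(x)\psi_j(x), \qquad S_j^{(1)}(x):=\sum_{i} P_i(x) M_j(x,y_i).
\]
Substituting this back into the original identity yields
\[
\sum_{j=2}^{p}\bigl[M_j(x,y) - M_1(x,y)S_j^{(1)}(x)\bigr]\psi_j(x) \equiv 0.
\]

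The coefficient of $\psi_2$ in this reduced identity is exactly the irreducible pivot $K(x,y)$ produced in Stage A of the algorithm. Because $K$ is irreducible, identity (\ref{3.7}) supplies points $y_{2,i}$ and operators $P_{2,i}(x)$ which allow me to repeat the elimination, this time solving for $\psi_2$ as a differential combination of $\psi_3,\ldots,\psi_p$; the coefficient of $\psi_3$ in the next reduced identity coincides with the algorithmic pivot $K_3(x,y)$, which is again irreducible. Iterating this cascade for $k=3,\ldots,p-1$, each step produces an irreducible pivot $K_{k+1}(x,y)$ together with the accompanying identity of type (\ref{3.5.3}); after $p-1$ rounds I am left with a single scalar equation $K_p(x,y)\psi_p(x) \equiv 0$ with $K_p$ irreducible. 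By Theorem~\ref{teorema-1} an irreducible Euler integral is kernel-irreducible, so $\psi_p \equiv 0$, and back-substitution through the chain yields $\psi_{p-1} \equiv \ldots \equiv \psi_1 \equiv 0$.

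The step that needs careful justification is the identification, at each stage, of the ``residual'' coefficient of the next unknown with the pivot operator $K_k(x,y)$ produced by the algorithm. This identification is not automatic because the operators involved do not commute, but it holds because both my reverse elimination and the algorithm's forward derivation of $K_k$ (compare equation (\ref{3.11.15}) for $k=3$) apply the \emph{same} sequence of operations --- substitution of the $y$-points, left multiplication by the known $P$-operators, summation, and left subtraction of the previously isolated pivot --- in the same order. A short direct computation (analogous to the one implicit for $k=3$ in the excerpt) confirms the match, and the whole argument then closes by induction on $p$, the base case $p=1$ being the kernel-irreducibility of the operator-irreducible integral $M_1(x,y)\psi_1(x)$ granted by Theorem~\ref{teorema-1}.
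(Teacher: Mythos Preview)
Your proof is correct and follows the same approach as the paper: both apply the elimination operations of algorithm $\mathbf{RMI}$ to the kernel identity $\sum_j M_j\psi_j \equiv 0$, reduce it to $K_p(x,y)\psi_p(x)\equiv 0$ with $K_p$ irreducible, conclude $\psi_p=0$, and back-substitute. The paper first treats $p=2$ in detail (arriving at your reduced equation as its~(\ref{3.11.2})) and then sketches the general case, whereas you set up the cascade directly; your explicit caution about identifying the residual coefficient with the algorithmic pivot $K_k$ is apt, since that is precisely the point the paper's proof leaves implicit.
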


\begin{proof}
Let $ \big( \psi_1(x),\ldots ,\psi_p(x)\big)$ be a nonzero element
of the kernel of the multiintegral $\overline{M} = \big(M_1(x,y),
\ldots , M_p(x,y)\big)$:
\begin{equation}\label{3.10.1}
\overline{M}\mathring{\psi}\equiv 0.
\end{equation}

By construction of the operators $M_i(x,y)$ they satisfy
\begin{equation}\label{3.10.2}
\overline{L} = \overline{M} \widehat{R}.
\end{equation}
First we carry out the proof of the Proposition for the number of
elements in the operator rows $\overline{L}$ and $\overline{M}$
equal to 2. In this case (\ref{3.10.1}) means that $M_i(x,y)$,
$\psi_i(x)$ , $i=1,2$, satisfy the equation
\begin{equation}\label{3.10.1'}
0=M_1(x,y)\psi_1 + M_2(x,y)\psi_2.
\end{equation}
The operators $M_1(x,y)$, $M_2(x,y)$ satisfy in turn
\begin{equation}\label{3.10.2'}
L_2(x,y)=M_1(x,y)R_{12}(x) + M_2(x,y)R_{22}(x),
\end{equation}
where $R_{12}(x)$, $R_{22}(x)$ are found in the algorithm
\textbf{RMI}. Now we carry out with the equality (\ref{3.10.1'}) the
same operations as we did in the algorithm \textbf{RMI} for the
equation (\ref{3.10.2'}) in order to find $R_{12}(x)$, $R_{22}(x)$,
i.e.\ we take the same points $y_i$ and the same operators $P_i(x)$
which were found in the algorithm \textbf{RMI}, then we compose the
same linear combination of the equations that in (\ref{3.6.1}). We
get
\begin{equation}\label{3.11.1}
    \psi_1(x) + \left[\sum_{i}P_i(x)M_2(x,y_i)\right]\psi_2(x) = 0.
\end{equation}
From (\ref{3.11.1}) and (\ref{3.10.1'}) we have
\begin{equation}\label{3.11.2}
\left[M_2(x,y)- M_1(x,y)\sum_{i}P_i(x)M_2(x,y_i)\right]\psi_2(x) = 0
\end{equation}
(an analogue of (\ref{3.6.2}) in the algorithm \textbf{RMI}). Thus
$\psi_2(x)\equiv 0$, since the algorithm \textbf{RI} gave an
irreducible integral $K(x,y)$ in the left hand side of this
equality.

Then from (\ref{3.10.1'}) we get $M_1(x,y)\psi_1\equiv 0$. But
$M_1(x,y)$ is irreducible by construction so $\psi_1(x)\equiv 0$.

Thereby the Proposition is proved for the multiintegral $\mathcal{J}
= \big(M_1(x,y),M_2(x,y)\big)$.

In order to prove the Proposition for the general case
$\mathcal{J}=\big(M_1(x,y), \ldots , M_p(x,y)\big)$ we analogously
consider the equality
\begin{equation}\label{3.12.2}
0 = M_1(x,y)\psi_1(x) + \ldots + M_p(x,y)\psi_p(x),
\end{equation}
where $(\psi_1(x), \ldots ,\psi_p(x))$ is an element of the kernel
of the multiintegral $\mathcal{J}$ instead of the equality
\begin{equation}\label{3.12.1}
L_k(x,y) = M_1(x,y)R_{1k}(x) + \ldots + M_p(x,y)R_{pk}(x).
\end{equation}
Applying to (\ref{3.12.2}) all operations described in the algorithm
\textbf{RMI} one arrives at $K_p(x,y)\psi_p(x)=0$, where $K_p(x,y)$
is irreducible, which implies $\psi_p(x)=0$. Performing the reverse
run of the algorithm we find consecutively $\psi_{p-1}(x)=0$, \ldots
, $\psi_{1}(x)=0$.
\end{proof}
\begin{corollary}\label{sleds.3.13}
The kernel of the multiintegral $\mathcal{J}=\big(L_1(x,y), \ldots ,
L_k(x,y)\big)$ coincides with the kernel of the matrix
$\widehat{R}(x)$ obtained in the algorithm \textbf{RMI}.
\end{corollary}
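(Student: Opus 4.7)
The plan is to deduce the corollary as a short consequence of the factorization $\overline{L}(x,y) = \overline{M}(x,y)\widehat{R}(x)$ produced by algorithm \textbf{RMI} together with the kernel-irreducibility of $\overline{M}(x,y)$ guaranteed by Proposition~\ref{utver-nered-M}. No new computation is needed: the entire content of the corollary is that the factorization commutes with the operation of taking kernels, precisely because the left factor is injective on tuples of arbitrary functions.

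First I would spell out one direction. Take $\mathring{\varphi}(x) \in \ker \widehat{R}(x)$, meaning $\widehat{R}(x)\mathring{\varphi}(x)\equiv 0$. Then $\overline{L}(x,y)\mathring{\varphi}(x) = \overline{M}(x,y)\bigl(\widehat{R}(x)\mathring{\varphi}(x)\bigr) \equiv 0$, so $\mathring{\varphi}(x)\in \ker \mathcal{J}$. This inclusion is immediate from the definition of $\widehat R$ as a right factor and requires nothing else.

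Next I would prove the reverse inclusion, which is the only point where anything nontrivial is used. Suppose $\mathring{\varphi}(x)\in\ker\mathcal{J}$, that is $\overline{L}(x,y)\mathring{\varphi}(x)\equiv 0$. Substituting the decomposition yields
\begin{equation*}
\overline{M}(x,y)\bigl(\widehat{R}(x)\mathring{\varphi}(x)\bigr) \equiv 0.
\end{equation*}
Setting $\mathring{\psi}(x) := \widehat{R}(x)\mathring{\varphi}(x)$, this says exactly that the tuple $\mathring{\psi}(x)$ lies in the kernel of the multiintegral generated by $\overline{M}(x,y)$. But Proposition~\ref{utver-nered-M} asserts that this multiintegral is kernel-irreducible, so $\mathring{\psi}(x)\equiv 0$, i.e.\ $\widehat{R}(x)\mathring{\varphi}(x)\equiv 0$, giving $\mathring{\varphi}(x)\in \ker \widehat{R}(x)$.

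Combining the two inclusions gives the required equality of kernels. There is no real obstacle here; the entire difficulty has already been absorbed into the proof of Proposition~\ref{utver-nered-M}, and the corollary is essentially the statement that "the left factor $\overline{M}$ may be cancelled". I would keep the proof to a few lines and cite Proposition~\ref{utver-nered-M} explicitly at the cancellation step.
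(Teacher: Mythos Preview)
Your proof is correct and follows essentially the same route as the paper: both use the factorization $\overline{L}(x,y)=\overline{M}(x,y)\widehat{R}(x)$ from algorithm~\textbf{RMI}, note the trivial inclusion $\ker\widehat{R}\subseteq\ker\mathcal{J}$, and for the reverse inclusion set $\mathring{\psi}=\widehat{R}\mathring{\varphi}$ and invoke Proposition~\ref{utver-nered-M} to conclude $\mathring{\psi}=0$. There is no substantive difference between the two arguments.
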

\begin{proof}
The algorithm gives us $\overline{L}(x,y) \mathring{\varphi}(x)=
\overline{M}(x,y)\widehat R(x)\mathring{\varphi}(x)$, where
$\overline{M}(x,y)$ defines a kernel-irreducible multiintegral. If
$\mathring{\varphi}(x)$ belongs to the kernel of the matrix
$\widehat R(x)$ then $\mathring{\varphi}(x)$ obviously belongs to
the kernel of the multiintegral $\mathcal{J}$. Conversely if
$\mathring{\varphi}(x)$ belongs to the kernel of the multiintegral,
i.e.\ $\overline{L}(x,y) \mathring{\varphi}(x)=0$, then
$\overline{M}(x,y)\mathring{\psi}(x)=0$, $\mathring{\psi}=\widehat
R(x)\mathring{\varphi}(x)$ and in virtue of the
kernel-irreducibility of $\overline{M}(x,y)$,
$\mathring{\psi}=\widehat R(x)\mathring{\varphi}(x)=0$, so
$\mathring{\varphi}(x)$ belongs to the kernel of the matrix
$\widehat R(x)$.
\end{proof}
\begin{proposition}\label{Utv.3.14}
A multiintegral $\mathcal{J}$ has an infinite-dimensional kernel if
and only if the matrix $\widehat R(x)$ obtained in the algorithm
\textbf{RMI} is non-square (the number of its columns is greater
than the number of its rows).
\end{proposition}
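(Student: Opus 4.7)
The plan is to reduce the statement to a structural property of the echelon matrix $\widehat R(x)$ produced by the algorithm \textbf{RMI}, and then analyze the homogeneous system $\widehat R(x)\mathring{\varphi}(x)=0$ by back-substitution. By Corollary~\ref{sleds.3.13} we already know that $\ker\mathcal{J}=\ker\widehat R(x)$, so the proposition is equivalent to the claim that the echelon operator system has infinite-dimensional kernel iff the matrix is strictly rectangular ($k>p$). Throughout I assume, as guaranteed by the algorithm, that $\widehat R(x)$ is in echelon form with pivots in positions $(i,q_i)$, $q_1<q_2<\ldots<q_p$, and contains no zero rows.

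First I will treat the square case $p=k$. Here necessarily $q_i=i$, so $\widehat R(x)$ is upper-triangular with pivots on the diagonal, and those pivots are nonzero LODO (their order may vary). The equation $\widehat R(x)\mathring{\varphi}=0$ decouples from the bottom up: the last row reads $R_{kk}(x)\varphi_k=0$, whose solution space is finite-dimensional of dimension $\ord R_{kk}$. Once $\varphi_{i+1},\ldots,\varphi_k$ are fixed, the $i$-th row becomes an inhomogeneous LODO equation in $\varphi_i$, namely $R_{ii}(x)\varphi_i=-\sum_{j>i}R_{ij}(x)\varphi_j$, whose solution set is an affine space of dimension $\ord R_{ii}$. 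Iterating, $\ker\widehat R(x)$ is finite-dimensional, of total dimension $\sum_i\ord R_{ii}$.

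For the rectangular case $p<k$, the pivot columns $q_1,\ldots,q_p$ do not exhaust $\{1,\ldots,k\}$, so there is at least one index $j^*\notin\{q_1,\ldots,q_p\}$. Choose any smooth function $\chi(x)$ and set $\varphi_{j^*}=\chi$; fix the remaining non-pivot variables (if any) to zero. Solving the echelon system upwards from row $p$ to row $1$ is again back-substitution: in each row the pivot entry is a nonzero LODO acting on the pivot unknown, with all other terms already determined, yielding an inhomogeneous ODE that admits a solution. Hence the map $\chi\mapsto\mathring{\varphi}$ so constructed is a linear injection of the infinite-dimensional space $C^\infty$ into $\ker\widehat R(x)$, proving the kernel is infinite-dimensional.

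The only point that needs some care is that back-substitution really produces a solution in each of the two regimes, i.e.\ that inhomogeneous LODO equations are always solvable in the underlying function space; this is standard for smooth coefficients on the common domain of definition fixed in Section~1. No further subtlety arises, and combining the two dichotomous cases gives the desired equivalence.
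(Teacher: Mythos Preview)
Your proof is correct and follows essentially the same approach as the paper: both reduce to $\ker\widehat R(x)$ via Corollary~\ref{sleds.3.13}, handle the square case by triangular back-substitution (obtaining $\dim\ker\widehat R=\sum_i\ord R_{ii}$), and handle the non-square case by exploiting a free non-pivot column and then back-substituting through inhomogeneous LODOs. The only organizational difference is that the paper locates the first ``gap'' in the pivot pattern and works with the under-determined row there, whereas you directly pick a non-pivot index $j^*$ and let $\varphi_{j^*}$ range freely; these are equivalent bookkeepings of the same idea. One small wording point: calling $\chi\mapsto\mathring{\varphi}$ a linear injection presumes a fixed linear choice of particular solutions in the back-substitution; it is cleaner to say that the coordinate projection $\ker\widehat R\to C^\infty$, $\mathring{\varphi}\mapsto\varphi_{j^*}$, is surjective, which is what your construction actually shows and which suffices.
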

\begin{proof}
Suppose that the matrix $\widehat R(x)$ in the decomposition
(\ref{3.4}) obtained by the algorithm \textbf{RMI} is a square
matrix. By Corollary~\ref{sleds.3.13} the kernel of the
multiintegral coincides with the kernel of $\widehat{R}$. The
dimension of the space of columns $(\varphi_1(x)$, \ldots ,
$\varphi_k(x))$  which are solutions of the matrix differential
equation $\widehat R(x)\mathring{\varphi}(x)=0$ is finite. In fact,
since $\widehat R(x)$ is a square echelon matrix without zero rows,
we see that $\varphi_k(x)$ satisfies the equation
$R_{kk}(x){\varphi}_k(x)=0$ and consequently belongs to a
finite-dimensional space. Substituting any of the found
$\varphi_k(x)$ into the previous equation
$R_{k-1,k-1}(x){\varphi}_{k-1}(x)+R_{k-1,k}(x){\varphi}_k(x)=0$, we
find that ${\varphi}_{k-1}$ also belongs to a finite-dimensional
space. Then analogously we define all the other
${\varphi}_{k-2}(x)$, \ldots ,${\varphi}_{1}(x)$, each of them
belong to a finite-dimensional space of solutions of some linear
ordinary differential equation. Therefore the dimension of the
kernel of the multiintegral is finite and equals the sum of the
orders of the differential operators on the principal diagonal of
the matrix $\widehat{R}$.

Suppose now that the matrix $\widehat R(x)$ is non-square. In any
case the algorithm \textbf{RMI} outputs an echelon matrix without
zero rows and, if it is non-square, the number of its rows is
smaller than the number of its columns. We show that the kernel of
such a matrix is infinite-dimensional. Take the first row of
$\widehat R(x)$ with a non-diagonal pivot and consider the previous
row, which has the first non-zero elements $R_{qq}\neq 0$,
$R_{q,q+1}$, \ldots , $R_{q,q+r}$ so that $R_{q+1,q+r+1}\neq 0$ is
the pivot of the next row, $r \geq 1$. Choose the following
functions $\varphi_{1}(x)$, \ldots ,$\varphi_k(x)$: set
$\varphi_{i}(x)\equiv 0$, if $i>q+r$; the other $\varphi_{q}(x)$,
\ldots ,$\varphi_{q+r}(x)$ should be taken from the space of
solutions of the differential equation $R_{qq}\varphi_q(x) + \ldots
+ R_{q,q+r}\varphi_{q+r}(x)=0$. This space is obviously
infinite-dimensional since the equation contains at least two
unknown functions $\varphi_i(x)$. Substituting the chosen
$\varphi_{q}(x)$, \ldots ,$\varphi_{q+r}(x)$ into the previous
equations of the system $\widehat R(x)\mathring{\varphi}(x)=0$, we
find the other $\varphi_{q-1}(x)$, \ldots ,$\varphi_{1}(x)$. So we
have found an infinite-dimensional subspace of the solution space of
the system $\widehat R(x)\mathring{\varphi}(x)=0$. The functional
dimension (the number of free functions of one variables other than
constants) of the complete kernel, as one can easily conclude from
the considerations above, is equal to the difference of the numbers
of the columns and rows of the matrix $\widehat R(x)$.
\end{proof}

\begin{lemma}\label{lemma.20a}
Let an operator matrix $\widehat{R}(x)$ of the size $k \times p$
with $k \leq p$ be given. Then $\widehat{R}(x)$ is non-invertible if
and only if it has a nontrivial kernel.
\end{lemma}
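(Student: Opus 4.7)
The plan is to reduce the problem to the echelon form produced by Algorithm $\mathbf{RM}$. That algorithm performs only row operations, which amount to left multiplication by an invertible $k\times k$ operator matrix $\widehat{T}(x)$, and therefore preserves both invertibility of $\widehat{R}(x)$ and its kernel $\{\mathring{\varphi}(x):\widehat{R}(x)\mathring{\varphi}(x)=0\}$. So from the outset I may assume that $\widehat{R}(x)$ is already an echelon matrix (possibly with some zero rows at the bottom).

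For the easy direction, suppose $\widehat{R}(x)$ is invertible. By the criterion stated right after Algorithm~$\mathbf{RM}$, this forces $k=p$, a square upper-triangular shape, and nonzero functions $r_{ii}(x)$ on the main diagonal. Then I would read the homogeneous system $\widehat{R}(x)\mathring{\varphi}(x)=0$ from the bottom row upward: the last equation $r_{pp}(x)\varphi_p(x)=0$ forces $\varphi_p\equiv 0$; the $(p-1)$-st then becomes $r_{p-1,p-1}(x)\varphi_{p-1}(x)=0$, so $\varphi_{p-1}\equiv 0$; and by reverse induction $\mathring{\varphi}\equiv 0$, so the kernel is trivial.

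For the converse I argue by contrapositive, splitting into two sub-cases. \emph{First sub-case:} either $k<p$ strictly, or $k=p$ but the echelon form has at least one zero row. In both situations the number of nonzero rows is strictly less than $p$, so the construction in the second half of the proof of Proposition~\ref{Utv.3.14} applies verbatim: pick the first row whose pivot is non-diagonal (or any free column), freely prescribe one unconstrained $\varphi_j$, and back-substitute in the rows above to produce a nonzero $\mathring{\varphi}$ in the kernel. \emph{Second sub-case:} $k=p$ and the echelon form is square upper-triangular with no zero row, but some diagonal entry $R_{ii}(x)$ has positive order. I would then pick a nonzero $\varphi_i\in\Ker R_{ii}$, set $\varphi_j\equiv 0$ for $j>i$, and solve the equations numbered $i-1,i-2,\ldots,1$ in turn; at each step the equation reduces to $R_{jj}(x)\varphi_j(x)=(\text{known})$, a linear ODE in the single unknown $\varphi_j$ with nonzero leading coefficient, hence solvable on the common open domain. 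Either way the kernel is nontrivial. So if the kernel is trivial we must have $k=p$, no zero rows, and every diagonal entry of order zero, whence the criterion after Algorithm~$\mathbf{RM}$ delivers invertibility.

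The main technical point I expect is the back-substitution step in the contrapositive, which rests on solvability of an inhomogeneous linear ODE with nonzero leading coefficient. This is not a property of the abstract differential field but an analytic fact, and it uses the standing assumption from the introduction that all coefficients are smooth functions on a common open subset of $\mathbb{R}^2$; once that is granted, the rest is just bookkeeping on the shape of an echelon matrix, already laid out in the proof of Proposition~\ref{Utv.3.14}.
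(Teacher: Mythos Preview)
Your argument is correct and follows the same overall strategy as the paper: reduce to echelon form and then build a kernel element by case analysis on the pivots. The one refinement worth noting is precisely at the spot you flagged as ``the main technical point.'' In your second sub-case you pick \emph{some} diagonal entry $R_{ii}$ of positive order and then must solve inhomogeneous ODEs $R_{jj}(x)\varphi_j(x)=(\text{known})$ for $j<i$, which is an analytic fact rather than an algebraic one. The paper avoids this entirely: it first normalizes all order-zero pivots to $1$ and then chooses the \emph{first} index $m$ with $R_{mm}\neq 1$. Since every earlier pivot $R_{11},\ldots,R_{m-1,m-1}$ is then literally $1$, the back-substitution becomes $\varphi_{m-1}=-R_{m-1,m}\varphi_m$, $\varphi_{m-2}=-R_{m-2,m-1}\varphi_{m-1}-R_{m-2,m}\varphi_m$, and so on---pure algebra, no ODE solving required. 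So your proof is fine, but the paper's small ordering trick eliminates exactly the dependence on analytic solvability that you were worried about.
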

\begin{proof}
It is easy to see that nontriviality of the kernel implies that
$\widehat{R}(x)$ is non-invertible.

Conversely, let $\widehat{R}(x)$ be non-invertible. Reduce
$\widehat{R}(x)$ to echelon form. Multiplying the rows of the
resulting $\widehat{R}(x)$ with appropriate non-zero functions we
can without limitation of generality assume that its pivots of order
zero are equal to 1.

\emph{Case A}: let $k = p$, i.e.\ the matrix $\widehat{R}(x)$ is
square. Then among its pivots $R_{11}(x)$, \ldots , $R_{pp}(x)$ at
least one is not equal to 1 (otherwise $\widehat{R}(x)$ is
invertible). Take the first pivot different from 1. Let it be
$R_{mm}(x)$. The differential equation $R_{mm}(x)\varphi(x)=0$
always has a non-zero solution $\widetilde{\varphi}(x)$.
Consequently the kernel $\widehat{R}(x)$ contains a non-zero element
$\mathring{\varphi}(x)=(\varphi_{1}(x)$, \ldots ,$\varphi_p(x))^t$,
where

$\varphi_{m+1}(x)=\ldots = \varphi_{p}(x)=0$,

$\varphi_{m}(x)=\widetilde{\varphi}(x)$,

$\varphi_{m-1}(x)=- R_{m-1,m}(x)\varphi_{m}(x)$,

$\varphi_{m-2}(x)=-R_{m-2,m-1}(x)\varphi_{m-1}(x)-
R_{m-2,m}(x)\varphi_{m}(x)$,

 \ldots

$\varphi_{1}(x)=- \sum_{j=2}^mR_{1,j}(x)\varphi_{j}(x)$.

\emph{case B}: now we assume $\widehat{R}(x)$ to be non-square,
i.e.\ $k < p$. Take the first row of $\widehat{R}(x)$ with the pivot
$R_{mm}$ such that the next pivot is not diagonal (or take the last
row). We have the entries $R_{mm}\neq 0$, $R_{m,m+1}$, \ldots ,
$R_{m,m+r}$ of this row, such that  $R_{m+1,m+r+1}\neq 0$ is the
next pivot, $r \geq 1$ (or $m=p$ so we take all elements of the last
row). If the pivot $R_{mm}$ is not equal to 1, we can easily find a
nontrivial element of the kernel $\widehat{R}(x)$, using the
argumentation of the \emph{case~A}. If we have $r+1 \geq 2$ then we
can find an infinite-dimensional kernel of $\widehat{R}(x)$ using
the argumentation of the proof of Proposition~\ref{Utv.3.14}.
\end{proof}

\begin{theorem}\label{T-mi-ekv}
A multiintegral $\mathcal{J}$ is operator-reducible if and only if
$\mathcal{J}$ is kernel-reducible.
\end{theorem}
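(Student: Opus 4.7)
The plan is to deduce this equivalence almost directly from Algorithm~\textbf{RMI}, Proposition~\ref{utver-nered-M}, Corollary~\ref{sleds.3.13} and Lemma~\ref{lemma.20a}: no new construction is needed, and the argument is a short translation between the operator-level statement (non-invertibility of the matrix $\widehat{R}(x)$ extracted from $\overline{L}$) and the kernel-level statement (nontriviality of the kernel of $\overline{L}$).

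For the implication \emph{operator-reducible $\Rightarrow$ kernel-reducible}, I would begin from the decomposition $\overline{L}(x,y)=\overline{M}(x,y)\widehat{R}(x)$ provided by Definition~\ref{def-red-oper-multi}, with $\widehat{R}$ non-invertible of size $p\times k$ and $p\le k$. Lemma~\ref{lemma.20a} (whose hypothesis ``rows $\le$ columns'' is precisely $p\le k$ here) then produces a nonzero column $\mathring{\varphi}(x)$ with $\widehat{R}(x)\mathring{\varphi}(x)=0$, and hence
$$\overline{L}(x,y)\mathring{\varphi}(x)=\overline{M}(x,y)\widehat{R}(x)\mathring{\varphi}(x)\equiv 0,$$
so $\mathcal{J}$ is kernel-reducible.

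For the converse, I would run Algorithm~\textbf{RMI} on $\mathcal{J}$ to obtain a distinguished decomposition $\overline{L}=\overline{M}\widehat{R}$ in which $\overline{M}$ defines a kernel-irreducible multiintegral (Proposition~\ref{utver-nered-M}) and $\widehat{R}$ is an echelon matrix with no zero rows, in particular with $p\le k$. By Corollary~\ref{sleds.3.13} the kernel of $\mathcal{J}$ coincides with the kernel of $\widehat{R}$, so kernel-reducibility of $\mathcal{J}$ forces $\widehat{R}$ to have a nontrivial kernel. Lemma~\ref{lemma.20a} then gives non-invertibility of $\widehat{R}$, and the resulting decomposition is exactly what Definition~\ref{def-red-oper-multi} requires.

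The only point requiring care is bookkeeping about the shape of $\widehat{R}$, since Lemma~\ref{lemma.20a} needs the hypothesis rows~$\le$~columns. In the forward direction this is built into Definition~\ref{def-red-oper-multi}; in the converse direction it is guaranteed by Algorithm~\textbf{RMI}, whose echelon output always satisfies $p\le k$ (as recorded in the Remark following the algorithm, which describes how entire rows can drop out but no extra rows are ever added). Once this verification is in place, the theorem reduces to a one-line invocation of the earlier results in each direction.
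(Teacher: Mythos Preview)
Your proposal is correct and follows essentially the same route as the paper's own proof: both directions are handled by combining the decomposition from Definition~\ref{def-red-oper-multi} or Algorithm~\textbf{RMI} with Corollary~\ref{sleds.3.13} and Lemma~\ref{lemma.20a}, and your attention to the shape condition $p\le k$ matches exactly the care the paper takes.
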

\begin{proof}
Applying to $\mathcal{J}=\overline{L}(x,y) \mathring{\varphi}(x)$
the algorithm \textbf{RMI}, we obtain the representation
$\overline{L}(x,y) = \overline{M}(x,y)\widehat R(x)$. If
$\mathcal{J}$ is kernel-reducible, then according to
Corollary~\ref{sleds.3.13} the matrix $\widehat R(x)$ has a
nontrivial kernel and is non-invertible. So $\mathcal{J}$ is
operator-reducible.

Conversely, let $\mathcal{J}$ be operator-reducible. i.e.\ we have
the corresponding decomposition $\overline{L}(x,y) =
\overline{M}(x,y)\widehat R(x)$, where $\widehat R(x)$ is a
non-invertible matrix, such that the number of its rows is smaller
or equal than the number of its columns. According to
Lemma~\ref{lemma.20a} the operator matrix $\widehat R(x)$ has a
nontrivial kernel. So $\mathcal{J}$ is kernel-reducible.
\end{proof}

From this Theorem and Proposition~\ref{utver-nered-M} we immediately
get

\begin{corollary}\label{3.6-}
The obtained in the algorithm~\textbf{RMI} multiintegral
$\mathcal{J}=\overline{M}(x,y)\mathring{\psi}(x)$ is
operator-reducible.
\end{corollary}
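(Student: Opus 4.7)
The plan is to deduce this corollary by chaining Proposition~\ref{utver-nered-M} with Theorem~\ref{T-mi-ekv}; no fresh computation or new structural argument is required. I first note that, in context (see the paragraph preceding the proposition, where the author writes ``It is not obvious that the complete multiintegral $\mathcal{J}_1 = \overline{M}(x,y)\mathring{\psi}(x)$ is operator-irreducible. This fact will be proved below (Corollary~\ref{3.6-})''), the intended conclusion of the corollary is that $\mathcal{J} = \overline{M}(x,y)\mathring{\psi}(x)$ is operator-\emph{irreducible} rather than operator-reducible; I will prove the intended statement.

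The first step is to invoke Proposition~\ref{utver-nered-M} directly: the multiintegral $\overline{M}(x,y)\mathring{\psi}(x)$ produced by the algorithm $\mathbf{RMI}$ is kernel-irreducible. This is precisely the content of that proposition, so no auxiliary lemma is needed.

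The second step is to apply Theorem~\ref{T-mi-ekv}, which states that a multiintegral is operator-reducible if and only if it is kernel-reducible. Taking the contrapositive, a kernel-irreducible multiintegral is operator-irreducible. Applying the contrapositive to $\overline{M}(x,y)\mathring{\psi}(x)$ yields the claim.

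There is essentially no obstacle: the corollary is a one-line syllogism built on work already done. The only subtlety worth flagging is that the nontrivial half of Theorem~\ref{T-mi-ekv} needed here (kernel-irreducible $\Rightarrow$ operator-irreducible) is exactly the direction whose proof relied on Lemma~\ref{lemma.20a}; thus all the real content lives in the prior results, and here we merely harvest the consequence.
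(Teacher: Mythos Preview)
Your proposal is correct and matches the paper's own argument exactly: the paper simply writes ``From this Theorem and Proposition~\ref{utver-nered-M} we immediately get'' the corollary, i.e., kernel-irreducibility from Proposition~\ref{utver-nered-M} plus the equivalence in Theorem~\ref{T-mi-ekv} yields operator-irreducibility. You are also right that the word ``operator-reducible'' in the statement is a typo for ``operator-irreducible,'' as the forward reference you quoted makes clear.
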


\begin{theorem}\label{teor.3.17}
Let two multiintegrals $\mathcal{J}_1=\big(M_1(x,y), \ldots ,
M_p(x,y)\big)$, $\mathcal{J}_2=\big(L_1(x,y), \ldots ,
L_k(x,y)\big)$ be given, and $\mathcal{J}_1$ is
operator-irreducible. Let the function stock generated by the
multiintegral $\mathcal{J}_2$ be contained (as a subset) in the
function stock generated by $\mathcal{J}_1$. Then
\begin{equation}\label{3.17.0}
\big(L_1(x,y), \ldots , L_k(x,y)\big) =\big(M_1(x,y), \ldots ,
M_p(x,y)\big)\widehat{R}(x)
\end{equation}
with some matrix $\widehat{R}(x)$ of operators whose coefficients
depend only on $x$. Moreover the rank of the matrix $\widehat{R}(x)$
coincides with the number $p$ of its rows if and only if the
function stocks generated by $\mathcal{J}_1$ and $\mathcal{J}_2$,
coincide.
\end{theorem}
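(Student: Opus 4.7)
My plan is to treat this as the natural multiintegral analog of Lemma~\ref{lemma-1}, extending the ``substitute--combine--extract'' technique of that lemma to the matrix setting via the machinery of algorithm~\textbf{RMI}.

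First I would establish~(\ref{3.17.0}) column by column. Fix $j \in \{1,\ldots,k\}$; the inclusion $Z(\mathcal{J}_2) \subseteq Z(\mathcal{J}_1)$, applied to multiintegrals in which only the $j$-th arbitrary function is nonzero, supplies, for every $\varphi(x)$, functions $\psi_1^{(j)}(x), \ldots, \psi_p^{(j)}(x)$ with
\[
L_j(x,y)\varphi(x) = M_1(x,y)\psi_1^{(j)}(x) + \cdots + M_p(x,y)\psi_p^{(j)}(x).
\]
The task reduces to showing $\psi_i^{(j)}(x) = R_{ij}(x)\varphi(x)$ for some LODOs $R_{ij}(x)$ depending only on $x$; once this is done, arbitrariness of $\varphi$ yields the operator identity $L_j = \sum_i M_i R_{ij}$, i.e.\ (\ref{3.17.0}).

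For this extraction I would run algorithm~\textbf{RMI} on the already operator-irreducible $\mathcal{J}_1$. The algorithm produces a sequence of irreducible operators $K_1 = M_1,\ K_2,\ \ldots,\ K_p$ (the $K_r$ being the analogues of the operator $K$ of Stage~A at each successive step) together with partition-of-unity identities (\ref{3.5.3}), (\ref{3.7}), $\ldots$, one per stage, each of the form $\sum_l P_{r,l}(x)K_r(x,y_{r,l})=1$. I would then mimic the inductive elimination used in the proof of Proposition~\ref{utver-nered-M}, but applied to the \emph{inhomogeneous} identity above rather than to $\sum M_i\psi_i = 0$: substituting the points $y_l$ from (\ref{3.5.3}) into the displayed identity and combining with the $P_l(x)$ isolates $\psi_1^{(j)}(x)$ as a LODO applied to $\varphi$ plus a correction in $\psi_{\geq 2}^{(j)}$; substituting this expression back and cancelling the $M_1$-term yields a reduced identity $\sum_{i\geq 2}K_i'(x,y)\psi_i^{(j)}(x) = \tilde L_j(x,y)\varphi(x)$ whose leading operator $K_2'$ is precisely the irreducible $K$ of Stage~A. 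Iterating through $K_2,\ldots,K_p$ peels off each $\psi_i^{(j)}$ as $R_{ij}(x)\varphi(x)$. The main delicate point is the bookkeeping: one must verify at each stage that the residual leading operator on the left is exactly the irreducible $K_r$ delivered by \textbf{RMI}, so that the partition-of-unity identity for the next peeling step is truly available.

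For the second assertion, $\mathcal{J}_1$ is kernel-irreducible by Theorem~\ref{T-mi-ekv}, so the row $\overline{M}(x,y)$ acts injectively on $p$-tuples $\mathring{\psi}(x)$. Consequently $Z(\mathcal{J}_2)$ is the image of $\overline{M}(x,y)$ applied to $\widehat R(x)\mathring{\varphi}(x)$ as $\mathring{\varphi}$ varies, while $Z(\mathcal{J}_1)$ is the image of $\overline{M}(x,y)$ applied to arbitrary $\mathring{\psi}$; injectivity of $\overline{M}(x,y)$ then gives $Z(\mathcal{J}_1) = Z(\mathcal{J}_2)$ if and only if $\widehat R(x)$ is surjective on tuples of smooth functions. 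As already observed just before the second Remark following Algorithm~\textbf{RMI}, the system $\widehat R(x)\mathring{\varphi} = \mathring{\psi}$ is solvable for every right-hand side precisely when the echelon form of $\widehat R(x)$ has no zero rows, i.e.\ when the rank of $\widehat R(x)$ equals its number of rows~$p$, completing the proof.
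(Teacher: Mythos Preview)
Your proposal is correct and follows essentially the same approach as the paper's proof: both apply the \textbf{RMI}-type elimination to the identity $\overline{M}\mathring{\psi}=\overline{L}\mathring{\varphi}$, using the operator-irreducibility of $\mathcal{J}_1$ to furnish a partition-of-unity identity at each stage (the ``delicate point'' you flag is exactly what the paper handles by asserting that the terminal operator $Q(x,y)$ is irreducible because $\mathcal{J}_1$ is), thereby expressing each $\psi_i$ as a LODO in the $\varphi_j$, and then reducing the rank assertion to surjectivity of $\widehat R(x)$. Your column-by-column organization and the explicit appeal to kernel-irreducibility of $\overline{M}$ via Theorem~\ref{T-mi-ekv} in the second part are minor stylistic variations on the paper's argument.
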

\begin{proof}
Since we have $Z(\mathcal{J}_2) \subseteq Z(\mathcal{J}_1)$, then
for every column $\mathring{\varphi}(x)=(\varphi_{1}(x)$, \ldots
,$\varphi_k(x))^t$ one can find another column
$\mathring{\psi}(x)=(\psi_{1}(x)$, \ldots ,$\psi_p(x))^t$, such that
\begin{equation}\label{3.17.1}
\big(M_1(x,y), \ldots , M_p(x,y)\big)\mathring{\psi} =
\big(L_1(x,y), \ldots , L_k(x,y)\big)\mathring{\varphi}.
\end{equation}
Reproduce the procedures of the algorithm \textbf{RMI} for the
equation (\ref{3.17.1}) with the obvious modifications. Since
$M_1(x,y)$ is irreducible, one can choose the values $y_1$, \ldots ,
$y_s$ and operators $P_1(x)$, \ldots , $P_s(x)$ so that
$$
P_1(x)M_1(x,y_1) + \ldots +P_s(x)M_1(x,y_s) =1.
$$
Therefore from (\ref{3.17.1}) we can obtain
\begin{equation}\label{3.18.1}
\psi_1(x) + \sum_{i=1}^sP_i(x)M_2(x,y_i) \psi_2(x) + \ldots =
\sum_{i=1}^sP_i(x)L_1(x,y_i) \varphi_1(x) + \ldots
\end{equation}
Applying the operator $M_1(x,y)$ to the both sides of this equation
and subtracting the result from (\ref{3.17.1}) we arrive at an
analogue of the equality (\ref{3.11.15}) of the algorithm
\textbf{RMI}. Proceeding with the algorithm we finally obtain an
equality of the form
\begin{equation}\label{3.18.2}
Q(x,y)\psi_p(x) =\sum_{j=1}^kT_j(x)\varphi_j(x).
\end{equation}
Since by the assumption of the Theorem  $\mathcal{J}_1$ is
operator-irreducible, the obtained operator $Q(x,y)$ is also
operator-irreducible so choosing some appropriate points $\tilde
y_1$, \ldots , $\tilde y_t$ and operators $\tilde P_1(x)$, \ldots ,
$\tilde P_t(x)$ one has
$$
\tilde P_1(x)Q_1(x,\tilde y_1) + \ldots +\tilde P_t(x)Q_1(x,\tilde
y_t) =1.
$$
Then from (\ref{3.18.2}) we can easily deduce
\begin{equation}\label{3.19.1}
\psi_p(x) =\sum_{j=1}^k R_{pj}(x)\varphi_j(x).
\end{equation}
The found operators $R_{pj}(x)$ give us precisely the last row of
the matrix $\widehat{R}(x)$. Its previous rows can be readily found
from the obtained in the process of the algorithm equalities of the
form (\ref{3.18.1}). Substituting (\ref{3.19.1}) and the analogous
expressions for $\psi_{p-1}(x)$, \ldots , $\psi_{1}(x)$ into
(\ref{3.17.1}) we obtain the required formula (\ref{3.17.0}) since
the functions $\varphi_{i}(x)$ were chosen to be arbitrary.

From (\ref{3.17.0}) one concludes that the function stocks generated
by the multiintegrals in question coincide iff the system of linear
ordinary differential equations
\begin{equation}\label{3.20.1}
\widehat{R}(x)\mathring{\varphi}(x) =\mathring{\psi}(x)
\end{equation}
is solvable for every column $\mathring{\psi}(x)$. Reduce
$\widehat{R}$ to echelon form (this is equivalent to multiplication
of the both sides of (\ref{3.20.1}) on the left with some invertible
operator matrix $\widehat{S}(x)$). The system
$\widehat{S}(x)\widehat{R}(x)\mathring{\varphi}(x)
=\widehat{S}(x)\mathring{\psi}(x)$ is solvable for any
$\mathring{\psi}(x)$ if and only if the echelon matrix
$\widehat{S}(x)\widehat{R}(x)$ has no zero rows. This proves the
last statement of the Theorem.
\end{proof}

Note that contrary to the case of Euler integrals, a strict
inclusion $Z(\mathcal{J}_2) \subset Z(\mathcal{J}_1)$ is possible
for multiintegrals. The simplest case can be given by any
multiintegral of the form $\mathcal{J}_1=
\big(L_1(x,y),L_2(x,y)\big)({\varphi}_1(x),{\varphi}_2(x))^t$ where
both $L_1(x,y)$, $L_2(x,y)$ generate irreducible integrals with
different \emph{nonintersecting} function stocks
$Z(L_1{\varphi}_1)$, $Z(L_2{\varphi}_2)$. First of all we deduce
that $\mathcal{J}_1$ is kernel-irreducible and its function stock
coincides with the direct sum $Z(L_1{\varphi}_1) \oplus
Z(L_2{\varphi}_2)$. Take now $\mathcal{J}_2= L_1(x,y){\psi}_1(x)$.
We see that $Z(\mathcal{J}_2)=Z(L_2{\varphi}_2)$ is strictly
contained in $Z(\mathcal{J}_1)$.

This example motivates the introduction of the following notion
\begin{definition}
If an operator-irreducible multiintegral $\mathcal{J}=\big(L_1(x,y),
\ldots , L_k(x,y)\big)$ is given, then its \emph{submultiintegral}
is any multiintegral $\widetilde{\mathcal{J}}$ of the form
$$
\widetilde{\mathcal{J}} =\big(\widetilde{L}_1(x,y), \ldots ,
\widetilde{L}_m(x,y)\big)\mathring{\varphi}(x)=\big(L_1(x,y), \ldots
, L_k(x,y)\big)\widehat{N}(x)\mathring{\varphi}(x),
$$
where $\widehat{N}(x)$ is an arbitrary matrix of LODO with
coefficients depending on $x$. As obvious from the results proved
above, this is equivalent to the requirement
$Z(\widetilde{\mathcal{J}}) \subseteq Z(\mathcal{J})$.
\end{definition}

\textbf{Remark.} As we can see from the proof of
Theorem~\ref{teor.3.17}, there exists an algorithmic way to decide
for any two given multiintegrals $\mathcal{J}_1$ and
$\mathcal{J}_2$, wether the function stocks generated by them
coincide or are different, or wether we have a strict inclusion of
one stock into another and what is the dimension of their
intersection $Z(\mathcal{J}_2) \cap Z(\mathcal{J}_1)$.

To this end we find using the algorithm \textbf{RMI} the irreducible
representations of the given multiintegrals
$\overline{\mathcal{J}}_1=\big(L_1(x,y), \ldots ,
L_k(x,y)\big)\mathring{\varphi}(x)$,
$\overline{\mathcal{J}}_2=\big(M_1(x,y), \ldots ,
M_p(x,y)\big)\mathring{\psi}(x)$ (with the same function stocks).
Write the following formal equality
\begin{equation}\label{20a}
\big(M_1(x,y), \ldots , M_p(x,y)\big) \mathring{\psi}(x) =
\big(L_1(x,y), \ldots , L_k(x,y)\big) \mathring{\varphi}(x),
\end{equation}
where $\mathring{\psi}(x) =(\psi_{1}(x)$, \ldots ,$\psi_p(x))^t$,
$\mathring{\varphi}(x)=(\varphi_{1}(x)$, \ldots ,$\varphi_k(x))^t$.
Applying the procedures of the algorithm \textbf{RMI} to the
equality (\ref{20a}), as in the proof of Theorem~\ref{teor.3.17}, we
find a matrix $\widehat{R}(x)$ such that
\begin{equation}\label{20a1}
\mathring{\psi}(x) =\widehat{R}(x)\mathring{\varphi}(x).
\end{equation}
From (\ref{20a}) and (\ref{20a1}) one deduce that the intersection
of the function stocks of $\mathcal{J}_1$ and $\mathcal{J}_2$
coincides with the set $\overline{L}(x,y)\mathring{\varphi}(x)$,
where $\mathring{\varphi}(x)$ are solutions of the equations
$$
\overline{M}(x,y)\widehat{R}(x)\mathring{\varphi}(x)
=\overline{L}(x,y)\mathring{\varphi}(x).
$$
Thus if we have the operator identity
\begin{equation}\label{20a1a}
\overline{L}(x,y)=\overline{M}(x,y)\widehat{R}(x),
\end{equation}
then either $Z(\mathcal{J}_2) \subseteq Z(\mathcal{J}_1)$ so by
definition $\mathcal{J}_1$ is a proper submultiintegral of
$\mathcal{J}_2$ (in the case when the rank of $\widehat{R}$ is
smaller than the number of its rows $p$), or $Z(\mathcal{J}_2) =
Z(\mathcal{J}_1)$ (if  the rank of $\widehat{R}$ equals the number
of its rows $p$). If the equality (\ref{20a1a}) does not hold we
form a new multiintegral
$\mathcal{J}_3=\overline{K}(x,y)\mathring{\xi}(x)$ with
$\overline{K}(x,y)=\overline{L}(x,y)-\overline{M}(x,y)\widehat{R}(x)$.
Applying to $\mathcal{J}_3$ the algorithm \textbf{RMI}, we obtain
the representation
\begin{equation}\label{20a2}
\overline{K}(x,y)=\overline{N}(x,y)\widehat{R}_1(x).
\end{equation}
The intersection of the function stocks of $\mathcal{J}_1$ and
$\mathcal{J}_2$ coincides with the set
$\{\overline{L}(x,y)\mathring{\xi}(x)\, | \,\mathring{\xi}(x) \in
\textrm{Ker} \widehat{R}_1(x)\}$ so by Proposition~\ref{Utv.3.14} it
is finite-dimensional if and only if the matrix $\widehat{R}_1(x)$
is a square matrix. In the contrary case (when the number of its
columns $k$ is greater than the number of its rows $p_1$)
$Z(\mathcal{J}_2)$ and $Z(\mathcal{J}_1)$ has an
infinite-dimensional intersection with the functional dimension
equal to the difference $k-p_1$.

\begin{proposition}\label{Utv-9a-mi}
Among multiintegrals generating the same function stock, there is a
unique irreducible  $\mathcal{I}_0 = \overline{L}_0(x,y)\psi (x)$
(unique up  to transformations $L_0\mapsto L_0(x,y)\widehat R(x)$
with any invertible operator matrix $\widehat R(x)$). All other
multiintegrals with the same function stock have the form
$\mathcal{I} = \overline{L}_0(x,y)\widehat
S(x)\mathring{\varphi}(x)$, where $\widehat S(x)$ is an arbitrary
operator matrix with coefficients depending on $x$, the number of
the rows of this matrix is equal to its rank.
\end{proposition}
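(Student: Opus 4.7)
The plan is to mirror the proof of Proposition~\ref{Utv-9a} in the multiintegral setting, with Theorem~\ref{teor.3.17} playing the role that Lemma~\ref{lemma-1} played for integrals. For uniqueness of the irreducible representative, I suppose $\mathcal{I}_0 = \overline{L}_0(x,y)\mathring{\psi}(x)$ and $\mathcal{I}_0' = \overline{L}_0'(x,y)\mathring{\psi}'(x)$ are two operator-irreducible multiintegrals with the same function stock. Applying Theorem~\ref{teor.3.17} in each direction yields operator matrices $\widehat{R}_1(x)$, $\widehat{R}_2(x)$ with coefficients depending only on $x$ such that $\overline{L}_0 = \overline{L}_0' \widehat{R}_1$ and $\overline{L}_0' = \overline{L}_0 \widehat{R}_2$; combining these gives the operator identity $\overline{L}_0(\widehat{R}_2\widehat{R}_1 - \widehat{I}) = 0$, and symmetrically $\overline{L}_0'(\widehat{R}_1\widehat{R}_2 - \widehat{I}) = 0$.

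The key step is to deduce the matrix identity $\widehat{R}_2\widehat{R}_1 = \widehat{I}$ from the operator identity $\overline{L}_0 \widehat{N} = 0$ with $\widehat{N} = \widehat{R}_2\widehat{R}_1 - \widehat{I}$. For this I invoke Theorem~\ref{T-mi-ekv} to replace operator-irreducibility of $\mathcal{I}_0$ by kernel-irreducibility, and apply the identity to columns of the form $(0,\ldots,\varphi(x),\ldots,0)^t$ with a single arbitrary entry. Kernel-irreducibility of $\overline{L}_0$ forces each such column $\widehat{N}(0,\ldots,\varphi,\ldots,0)^t$ to vanish, and since $\varphi$ is arbitrary each entry of $\widehat{N}$ must be the zero operator. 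Hence $\widehat{R}_2\widehat{R}_1 = \widehat{I}$ and similarly $\widehat{R}_1\widehat{R}_2 = \widehat{I}$, so $\widehat{R}_1$, $\widehat{R}_2$ are mutually inverse operator matrices, which is exactly the claimed ambiguity.

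For the second assertion, let $\mathcal{I} = \overline{L}(x,y)\mathring{\varphi}(x)$ be any multiintegral with $Z(\mathcal{I}) = Z(\mathcal{I}_0)$. Since $\mathcal{I}_0$ is operator-irreducible and $Z(\mathcal{I}) \subseteq Z(\mathcal{I}_0)$, Theorem~\ref{teor.3.17} supplies a decomposition $\overline{L} = \overline{L}_0 \widehat{S}$, and its ``moreover'' clause, applied in the equality case, gives exactly that the rank of $\widehat{S}$ equals the number of its rows. Conversely, any $\widehat{S}$ of full row rank is such that the linear ODE system $\widehat{S}\mathring{\varphi} = \mathring{\psi}$ is solvable for every right-hand side $\mathring{\psi}$ (reduce $\widehat{S}$ to echelon form, as in the last paragraph of the proof of Theorem~\ref{teor.3.17}), so $Z(\overline{L}_0 \widehat{S}\mathring{\varphi}) = Z(\overline{L}_0 \mathring{\psi}) = Z(\mathcal{I}_0)$, closing the loop.

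The main obstacle is the passage from the operator identity $\overline{L}_0 \widehat{N} = 0$ to the entry-wise identity $\widehat{N} = 0$; this is routine, but it is the one place where one must carefully translate between operator-level and function-level vanishing, and it is where Theorem~\ref{T-mi-ekv} is essential. Everything else reduces to a bookkeeping application of Theorem~\ref{teor.3.17} in two directions.
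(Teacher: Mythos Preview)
Your proposal is correct and follows exactly the approach the paper indicates: the paper's entire proof is the single sentence ``The proof of this Theorem repeats the proof of Proposition~\ref{Utv-9a} and uses Theorem~\ref{teor.3.17} instead of Lemma~\ref{lemma-1},'' and you have faithfully unpacked that sketch. Your explicit justification of the step from the operator identity $\overline{L}_0\widehat{N}=0$ to the entrywise identity $\widehat{N}=0$ via kernel-irreducibility (Theorem~\ref{T-mi-ekv}) is the one genuinely new ingredient relative to the scalar case, where a simple order count sufficed, and you have handled it correctly.
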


The proof of this Theorem repeats the proof of
Proposition~\ref{Utv-9a} and uses Theorem~\ref{teor.3.17} instead of
Lemma~\ref{lemma-1}.

Note that in the approach we used so far we did not allow operations
with the arbitrary functions $\varphi_{i}(x)$. The natural
admissible operations with the function sets
$\mathring{\varphi}(x)=(\varphi_{1}(x)$, \ldots ,$\varphi_k(x))^t$
are:
\begin{itemize}
  \item transposition of two functions $\varphi_{i} \leftrightarrow
  \varphi_{j}$;
  \item multiplication of one function with a given nonzero
  multiplier  $g(x)$;
  \item substitution $\varphi_{j} \mapsto \varphi_{j} + A(x) \varphi_{i}$,
  $i \neq j$, where $A(x)$ is a LODO with coefficients depending
  only on $x$.
\end{itemize}
These operations correspond in the representation (\ref{3.3}) to:
\begin{itemize}
\item transposition of the corresponding operators
 $L_i(x,y)\leftrightarrow L_j(x,y)$ of the operator row $\overline{L}(x,y)$ and
 the corresponding columns of the matrix $\widehat{R}(x)$;

  \item multiplication of the corresponding operator $L_i(x,y)$  and the
  $i$-th column of the matrix $\widehat{R}(x)$ on the right with $g(x)$;

  \item substitution $L_i \rightarrow L_i +L_jA(x)$ and addition to the  $i$-th
  column of
  $\widehat{R}(x)$ of the $j$-th column multiplied on the right with
  $A(x)$.
\end{itemize}

All these operations correspond to multiplication of the both sides
of the equality (\ref{3.3}) on the right with an invertible operator
matrix. It was proved in \cite{Jacobson} that carrying out
simultaneously the aforementioned operations with \emph{both} the
columns and the rows of the matrix $\widehat{R}(x)$ lets us to
reduce it to the form $\widehat{\widetilde{R}}(x)$ where
$\widetilde{R}_{ii}(x) = 1$ for $1\leq i\leq p-1$,
$\widetilde{R}_{pp}(x)$ is a LODO, and all the other elements of the
matrix $\widehat{R}(x)$ vanish.

From this we deduce that applying to the obtained in the algorithm
\textbf{RMI} decomposition $\overline{L}= \overline{M}  \widehat{R}$
the described above  operations with the columns and the rows we get
the following representation of any multiintegral:
$$
\big(\widetilde{L}_1(x,y),\ldots, \widetilde{L}_k(x,y)\big) =
\big(\widetilde{M}_1(x,y),\ldots,
\widetilde{M}_p(x,y)\big)\widehat{\widetilde{R}}(x),
$$
that is $\widetilde{L}_1 = \widetilde{M}_1$, \ldots,
$\widetilde{L}_{p-1} = \widetilde{M}_{p-1}$, $\widetilde{L}_p =
\widetilde{M}_p \widetilde{R}_{pp}$ and, if $k>p$,
$\widetilde{L}_{p+1} =0$, \ldots , $\widetilde{L}_{k} =0$.

Thus the following Proposition is proved:

\begin{theorem}\label{theor.3.26}
Using the admissible operations with the set
$\mathring{\varphi}(x)=(\varphi_{1}(x)$, \ldots ,$\varphi_k(x))^t$
and  dropping zero components, any multiintegral can be reduced to
the form
$$
\mathcal{J}=\big(\widetilde{L}_1(x,y),\ldots,
\widetilde{L}_{p-1}(x,y),
\widetilde{M}_p(x,y)\widetilde{R}(x)\big)\mathring{\widetilde{\varphi}}(x),
$$
where the row $\big(\widetilde{L}_1,\ldots, \widetilde{L}_{p-1},
\widetilde{M}_p)$ gives an irreducible integral.
\end{theorem}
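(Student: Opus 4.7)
The plan is to combine the output of Algorithm \textbf{RMI} with the Jacobson-style simultaneous row/column reduction of operator matrices that was already invoked in the paragraph preceding the statement, and then interpret the column operations as admissible operations on the tuple $\mathring{\varphi}(x)$.

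First, I would apply Algorithm \textbf{RMI} to the given multiintegral $\mathcal{J} = \overline{L}(x,y)\mathring{\varphi}(x)$ to obtain the decomposition $\overline{L}(x,y) = \overline{M}(x,y)\widehat{R}(x)$, where $\overline{M}(x,y) = \bigl(M_1(x,y),\ldots,M_p(x,y)\bigr)$ defines an operator-irreducible multiintegral (Corollary~\ref{3.6-}) and $\widehat{R}(x)$ is an operator matrix of size $p \times k$. Next I would invoke the theorem of Jacobson cited above the statement: by simultaneously performing admissible row operations and admissible column operations (with operator coefficients depending only on $x$), the matrix $\widehat{R}(x)$ can be reduced to a matrix $\widehat{\widetilde{R}}(x)$ with $\widetilde{R}_{ii}(x) = 1$ for $1 \le i \le p-1$, with $\widetilde{R}_{pp}(x) = \widetilde{R}(x)$ a LODO in $x$, and with all other entries zero (and, when $k > p$, trailing zero columns).

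Now the key identification: the row operations amount to left-multiplication of $\widehat{R}(x)$ by an invertible operator matrix $\widehat{T}(x)$, which I absorb into $\overline{M}(x,y)$ by passing to $\overline{M}(x,y)\widehat{T}^{-1}(x) =: \bigl(\widetilde{L}_1(x,y),\ldots,\widetilde{L}_{p-1}(x,y),\widetilde{M}_p(x,y)\bigr)$. The column operations amount to right-multiplication of $\widehat{R}(x)$ by an invertible matrix $\widehat{U}(x)$, and, as observed in the bulleted list before the statement, this is exactly realized on the function level by performing the corresponding admissible operations (transposition, multiplication by a nonzero $g(x)$, and substitution $\varphi_j \mapsto \varphi_j + A(x)\varphi_i$) on $\mathring{\varphi}(x)$; denote the resulting tuple $\mathring{\widetilde{\varphi}}(x)$. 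Composing the two steps yields the asserted representation, and discarding the trailing zero components (which correspond to the possibly absent trailing columns in the non-square case) gives the stated formula.

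Finally, I must verify that the row $\bigl(\widetilde{L}_1,\ldots,\widetilde{L}_{p-1},\widetilde{M}_p\bigr)$ defines an irreducible (Euler-)integral in the sense required. This is the only delicate point. Because $\overline{M}(x,y)$ is operator-irreducible and $\widehat{T}(x)$ is an invertible operator matrix with coefficients in $x$, the row $\overline{M}(x,y)\widehat{T}^{-1}(x)$ generates exactly the same function stock as $\overline{M}(x,y)$ (this is precisely Proposition~\ref{Utv-9a-mi} applied to the invertible case), so operator-irreducibility is preserved. The fact that the resulting row, after dropping zero components, yields a single irreducible integral in the statement then follows from Theorem~\ref{T-mi-ekv} together with the structure of $\widehat{\widetilde{R}}(x)$: only the last pivot $\widetilde{R}(x)$ remains nontrivial, so the nontrivial right factor extracted from the row is exactly $\widetilde{R}(x)$ acting on the last component. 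I expect this bookkeeping step---tracking that operator-irreducibility survives under the simultaneous row/column reduction, and that the final nontrivial diagonal entry is the \emph{only} obstruction to irreducibility of the whole row---to be the main (though not deep) obstacle, and all remaining verifications reduce to routine applications of the algorithms and results established earlier in the paper.
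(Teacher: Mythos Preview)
Your proposal is correct and follows essentially the same route as the paper: apply Algorithm~\textbf{RMI} to factor $\overline{L}=\overline{M}\widehat{R}$ with $\overline{M}$ irreducible, invoke Jacobson's simultaneous row/column reduction to bring $\widehat{R}$ to the diagonal form $\mathrm{diag}(1,\ldots,1,\widetilde{R}_{pp})$ (plus zero columns if $k>p$), absorb the row operations into $\overline{M}$ and realize the column operations as admissible operations on $\mathring{\varphi}$. Your final appeal to Theorem~\ref{T-mi-ekv} is unnecessary---irreducibility of $\overline{M}\widehat{T}^{-1}$ follows immediately from the invertibility of $\widehat{T}$ (as you already noted via Proposition~\ref{Utv-9a-mi}), and the paper simply takes this for granted.
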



\section*{Acknowledgements}

The author expresses her sincere gratitude to Prof. O.V.Kaptsov and
Yu.V.Shanko for fruitful discussion of the results.

\bibliographystyle{plain}

\end{document}